\begin{document}
\vfuzz=10pt

\title{Square packing with $O(x^{0.6})$ wasted area}

\author*[1]{\fnm{Hong Duc} \sur{Bui}}\email{buihd@u.nus.edu}
\theoremstyle{thmstyleone}%
\newtheorem{theorem}{Theorem}
\newtheorem{conjecture}{Conjecture}
\newtheorem{proposition}{Proposition}%
\newtheorem{lemma}{Lemma}%
\newtheorem{corollary}{Corollary}%

\theoremstyle{thmstyletwo}%
\newtheorem{example}{Example}%
\newtheorem{remark}{Remark}%

\theoremstyle{thmstylethree}%
\newtheorem{definition}{Definition}%

\newtheorem{question}{Question}

\abstract{
We show a new construction for square packing, and prove that it is more efficient than previous results.
}

\maketitle

\section{Introduction}

Square packing is a well-studied problem.
Formally, we consider a large square $S$ with side length $x$ and ask what is the maximum number of unit squares
that can be packed without overlap into $S$.

We define $W(x)$ to be the area of wasted space when a square of side length $x$ is packed with unit squares.
In other words, $W(x)$ is $x^2$ minus the maximum number of unit squares that can fit in.
(For convenience, our definition of $W(x)$ is the same as in \cite{Erdos_1975, Chung_2009, wang2016newresultpackingunit}.)

In this article, we are concerned with the behavior of $W(x)$ as $x \to \infty$.

The trivial packing method allows $\lfloor x \rfloor^2$ squares to fit in,
which shows $W(x) \leq x^2-\lfloor x \rfloor^2 \in O(x)$.
However, a long line of research
\cite{Erdos_1975, Chung_2009, wang2016newresultpackingunit}
has given much better bounds, the best one gives $W(x) \in O(x^{0.625})$.
Also, \cite{Chung_2019} claims $W(x) \in O(x^{0.6})$, but \cite{Arslanov2025} shows that
there is a technical error in the calculation of $W_5$, with $W_5$ defined on \cite[p.~8]{Chung_2019}.

This article extends the insights in \cite{wang2016newresultpackingunit} to improve the result,
namely $W(x) \in O(x^{0.6})$.
%

On the opposite direction, \cite{roth1978inefficiency} proved $W(x) \notin o(x^{1/2})$.
This is a lower bound on $W(x)$.
Note that $W(x) \notin \Omega(x^{1/2})$, as $W(x) = 0$ for all positive integer $x$.

In a high level, the packing is done as follows.
First, packing a square is reduced to packing a right trapezoid, this reduction is
described in \Cref{trapezoid_reduction}.
Then, the trapezoid is packed by a certain kind of quadrilateral, this reduction is
described in \Cref{sec:right_trapezoid_packing}.
Finally, \Cref{sec:tightly_packed_quad} describes how to pack such a quadrilateral.

Within these three steps, \Cref{sec:tightly_packed_quad} is the most technical,
\Cref{sec:right_trapezoid_packing} is less technical, and 
\Cref{trapezoid_reduction} is standard, and the idea already appear in prior works.

\subsection{Note on independent discovery}

After putting this manuscript on arXiv, we learnt that an equivalent result was independently
discovered by McClenagan \cite{McClenagan2024,mcclenagan2026}, with a somewhat different method.

The two methods can be compared as follows.

The ``second packing algorithm'' in \cite[Section~3]{mcclenagan2026} is
similar to the packing method in \Cref{sec:tightly_packed_quad},
except that it is specialized to have $\Delta_1 + \Delta_2 = 1$,
so $i_j = j$ for every $j$. In particular, $i_m = m$,
that is, the packing has the same number of rows as columns.

By only analyzing the case where $i_m = m$, the analysis of wasted area becomes significantly simpler.
For comparison, in our article, in the application of \Cref{sec:tightly_packed_quad}
in \Cref{sec:right_trapezoid_packing}, we only use the case where $i_m \in \Theta(m)$.
On the other hand, by analyzing the general case where no restriction between $i_m$ and $m$ are imposed,
we allow for more flexibility on the method in \Cref{sec:right_trapezoid_packing}.
In particular, we don't need to use and analyze the dual packing method.
Besides, by analyzing the general case,
we were able to analyze in \Cref{subsec:limitation}
why having $i_m \in \Theta(m)$ appear to be the best choice with this packing method.

The ``first packing algorithm'' in \cite[Section~2]{mcclenagan2026}
is a vertical reflection of the so-called dual packing method we briefly mentioned in \Cref{subsec:dual_packing}.

The method used to pack a right trapezoid in \Cref{sec:right_trapezoid_packing} of our article
and \cite[Section~4]{mcclenagan2026} are completely different, as explained above.

The reduction from packing a square and packing a right trapezoid
in \Cref{trapezoid_reduction} of our article and \cite[Section~4]{mcclenagan2026}
are the same, and the same as prior works.

\section{Summary of Existing Techniques}
\label{sec:summary_existing}

The fundamental method used to overcome the trivial bound $x^2-\lfloor x \rfloor^2$ is the following.

Consider two parallel lines at a distance $x$ apart,
we wish to pack the space between them with unit squares.
If the trivial packing method is used
as in \Cref{fig:trivial_packing_image},
we would get $\Omega(x-\lfloor x \rfloor)$ area of wasted space per unit distance.
However, if we take vertical stacks of $\lceil x \rceil$ unit squares and tilt them as little as possible
such that they fit between the two lines, the angle of rotation needed will be $O(x^{-1/2})$,
therefore the wasted area is only $O(x^{-1/2})$ per unit distance. See \Cref{fig:fundamental_long_strip}
for an illustration.

\begin{figure}
    \centering
    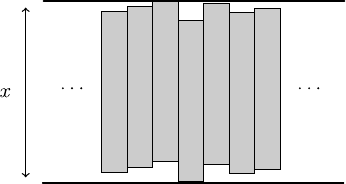
    \caption{Trivial packing method of the space between two parallel lines.}
    \label{fig:trivial_packing_image}
\end{figure}

\begin{figure}
    \centering
    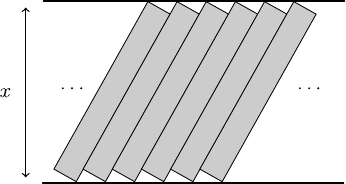
    \caption{Improved packing method of the space between two parallel lines, which reduces the wasted area to $O(x^{-1/2})$ per unit distance.}
    \label{fig:fundamental_long_strip}
\end{figure}

More generally, the height of each vertical stack can be any integer $m\geq x$ such that $m-x \in O(1)$.
On a first-order approximation, the angle of rotation is usually $\sqrt{\frac{2(m-x)}{x}}$,
as can be computed by an application of Pythagorean theorem.

This result can be interpreted as follows:
it provides a family of parallelograms that can be packed with small waste.
These parallelograms can then be assembled together to pack a larger shape.

The reason why this is so commonly used \cite{Erdos_1975, Chung_2009, wang2016newresultpackingunit} is that the parallelogram can have arbitrary (possibly nonintegral) height $x$,
which allows us to eliminate one \emph{obstruction} in square packing---two parallel or almost-parallel sides with distance $x$ apart where $x$ is not an integer.
Unfortunately, it creates another obstruction: the other side of the parallelogram is not perfectly vertical or horizontal.

Previously, there was no good way around this:
for example, \cite[Figure 9]{Chung_2009} cuts away small triangles (denoted $e_i$ in that article) from the side of the slanted trapezoid in order to make the two sides parallel.

In \Cref{sec:tightly_packed_quad},
we will describe another family of quadrilaterals that can be packed similarly tightly (wasted area $=$ perimeter $\times$ slope angle), but have two opposite sides non-parallel.
This is the main primitive that we assemble together
for the final packing.

\section{A Primitive Tightly-packed Quadrilateral}
\label{sec:tightly_packed_quad}

In this section, we prove the existence of a family of quadrilaterals
with the following properties:
\begin{itemize}
    \item Each interior angle is $90^\circ \pm o(1)$.
        Intuitively, the quadrilateral looks almost like a rectangle (unless the side length is too large,
        see \Cref{subsec:triangle_like_trapezoid}).

    \item
        Consider one such quadrilateral. It is approximately an axis-aligned rectangle, as above.
        Let $m$ and $i_m$ be some positive integers such that
        the almost-horizontal sides have length $\approx m$,
        the almost-vertical sides have length $\approx i_m$.
        Let $\theta>0$ such that the top left interior angle has measure $90^\circ - \theta$.

        Then there is a packing of the quadrilateral with total wasted area $\in O((m+i_m) \theta)$.

    \item This packing uses only almost-axis-aligned unit squares, that is, each unit square can be rotated
        by an angle of $o(1)$ to become perfectly axis-aligned.
\end{itemize}

The whole packing is illustrated in \Cref{fig:end_of_packing},
where the quadrilateral to be packed is $ABCD$.

We would like to note that this packing method is inspired from \cite{wang2016newresultpackingunit}.
The connection is described in \Cref{subsec:tightly_packed_quad_inspiration}.

Furthermore, when $i_m \in \Theta(m)$, our packing method
has the asymptotically smallest wasted area
among all packings for this family of quadrilaterals.
This is shown in \Cref{subsec:tightly_packed_lower_bound}.

For the convenience of the reader, all notations used in this section are listed in \Cref{summary_notations}.
Some of the notations (in particular, $\theta$, $\sigma_1$ and $\sigma_2$) are used again in later chapters.

\begin{table}
\begin{tabular}{ccc}
    Symbol                            & Meaning & Note \\ \hline
    $m$ & Number of columns & None \\
    $i_m$ & Number of rows & See $i_j$ below; $i_m \approx \sqrt{\frac{\sigma_2}{\sigma_1}} m$ \\
    $\theta$ & Angle between $AB$ and horizontal & None \\
    $\sigma_1$ & Angle between $Ax$ and $By$ & None \\
    $S_{i, j}$ & Unit square on $i$-th row, $j$-th column & Sloped by $\theta$ \\
    $T_{i, j}$ & \hyperref[def:T_ij]{Modified unit square} & Perfectly axis-aligned \\
    $\Delta_1$ & \hyperref[fig:illustration_delta]{Amount $S_{i, j+1}$ is to the right of $S_{i, j}$} &
    $\Delta_1 = \cos \theta \approx 1-\frac{\theta^2}{2}$ \\
    $\Delta_2$ & \hyperref[fig:illustration_delta]{Amount $S_{i+1, j}$ is to the right of $S_{i, j}$} & $\Delta_2 = \sec(\theta+\sigma_1) \sin \sigma_1 \approx \sigma_1$ \\
    $\Delta_3$ & \hyperref[fig:illustration_delta]{Amount $S_{i+1, j}$ is below of $S_{i, j}$} & $\Delta_3 
    \approx 1 + \frac{\theta(\theta+2 \sigma_1)}{2}$ \\
$\Gamma_{j}$ & \hyperref[def:gamma_j]{Amount $T_{i_j,j-1}$ is below $T_{i_j,j-2}$} & $\Gamma_{j} \approx \frac{\theta^4}{4 \sigma_1}+\theta$ \\
    $i_j$ & Some row index (see \Cref{def:i_j}) & $i_j = \left\lceil \frac{(j-1)(1-\Delta_1)}{\Delta_2} \right\rceil+1 \approx \frac{\theta^2}{2\sigma_1} j$ \\
    $\sigma_2$ & \hyperref[def:sigma_2]{Angle between $AB$ and $CD$} & $4 \sigma_1 \sigma_2 \approx \theta^4$ \\
\end{tabular}
\caption{Summary of notations in \Cref{sec:tightly_packed_quad}.}
\label{summary_notations}
\end{table}

\subsection{Description of the Packing Method}

\subsubsection{Initial Configuration}


\begin{figure}
    \centering
    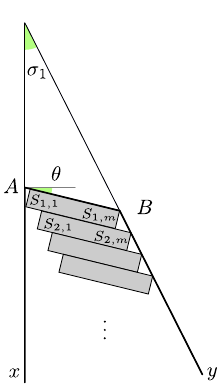
    \caption{First step in the packing method.}
    \label{fig:overall}
\end{figure}

Consider the configuration illustrated in \Cref{fig:overall}.
In words: There are two points $A$ and $B$, the line segment $AB$ is slightly sloped downwards and makes an angle $\theta$ with the horizontal line,
the ray $Ax$ points downward vertically, the ray $By$ points downward and make an angle $\sigma_1$
with the vertical line, such that the rays opposite $Ax$ and $By$ intersect above $A$.

Define a coordinate system such that $x$-axis points to the right, $y$-axis points up,
and point $A$ has $x$-coordinate $0$. (Its $y$-coordinate is unimportant.)

We will pack the area inside the region $xABy$ with horizontal stacks of squares, each $m$ unit squares,
where $m$ is some integer.
We assume we can put a stack of $m$ squares $(S_{1, 1}, \dots, S_{1, m})$, which we denotes $S_{1, \bullet}$,
top right corner touching $B$, top edge lying on the edge $AB$, and bottom left corner touching the ray $Ax$.

Note that the last requirement forces segment $AB$ to have length $m+\tan \theta$.

Then, we keep adding stacks $S_{2, \bullet}, S_{3, \bullet}, \dots$, each having $m$ squares,
top edge parallel to and touching the bottom edge of the previous one,
and top right corner touching the ray $By$.
Number the individual unit squares as in \Cref{fig:overall}.

Define $\Delta_1 = \cos \theta$, $\Delta_2 = \sec(\theta+\sigma_1) \sin \sigma_1$,
$\Delta_3 = \sec(\theta+\sigma_1) \cos \sigma_1$.

\begin{lemma}
    \label{lemma_def_deltas}
    For every $(i, j)$, square $S_{i, j+1}$ is $\Delta_1$ to the right of square $S_{i, j}$,
and square $S_{i+1,j}$ is $\Delta_2$ to the right and
$\Delta_3$ below $S_{i,j}$.
\end{lemma}

See \Cref{fig:illustration_delta} for illustration. This can be shown
by applying the definition of trigonometric functions on the two right triangles depicted in \Cref{fig:2}.

\begin{figure}
    \centering
    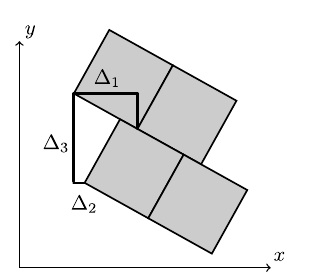
    \caption{Illustration of $\Delta_1$, $\Delta_2$ and $\Delta_3$.}
    \label{fig:illustration_delta}
\end{figure}

\begin{figure}
    \centering
    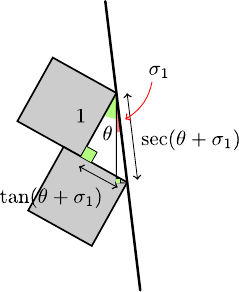
    \caption{Calculation of $\Delta_1$, $\Delta_2$ and $\Delta_3$ from $\theta$ and $\sigma_1$.}
    \label{fig:2}
\end{figure}

Since $\Delta_1<1$, the leftmost point of $S_{1, 2}$ has $x$-coordinate $<1$.
Since $\Delta_2>0$, for sufficiently large $i$, the leftmost point of $S_{i, 2}$ has $x$-coordinate $>1$.

\phantomsection\label{def:i_j}%
For each $j$, define $i_j$ to be the smallest value such that $S_{i_j, j}$ has $x$-coordinate of leftmost point $\geq(j-1)$.
(Clearly $i_1 = 1$. By the argument above, $i_2$ exists and is $>1$.)
Note that $S_{i_j, j}$ has $x$-coordinate of leftmost point $(i_j-1)\Delta_2 + (j-1)\Delta_1$,
therefore $i_j = \left\lceil \frac{(j-1)(1-\Delta_1)}{\Delta_2} \right\rceil+1$.

Note that if $\frac{1-\Delta_1}{\Delta_2} < 1$, some $i_j$ values might coincide.

We have $\Delta_1 \approx 1-\frac{\theta^2}{2}$, $\Delta_2 \approx \sigma_1$,
so $i_j \approx  \frac{\theta^2}{2\sigma_1} j $.

\subsubsection{Modification of the Packing}

We perform some modifications as illustrated in \Cref{fig:introduce_T_squares}. Formal description follows.
For each $j \geq 2$, we remove $S_{i', j-1}$ for all $i' \geq i_j$.
Then we add a perfectly vertical stack of squares with the leftmost point having $x$-coordinate $j-2$,
and the top right point touching the bottom side of $(i_j-1)$-th row.
\phantomsection\label{def:T_ij}%
For each positive integer $\Delta i$, let $T_{i_j+\Delta i-1,j-1}$ be the $\Delta i$-th unit square from the top of this stack.
(So, for example, $T_{i_j,j-1}$ should almost overlap $S_{i_j,j-1}$ that has just been removed.)

\begin{figure}
    \centering
    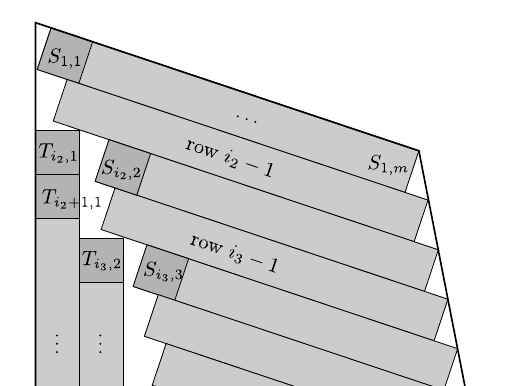
    \caption{Illustration for introduction of the unit squares labeled $T_{i, j}$.}
    \label{fig:introduce_T_squares}
\end{figure}

We note that none of the unit squares $S_{i, j}$ or $T_{i, j}$ overlap.

\phantomsection\label{def:gamma_j}
Define $\Gamma_{j} = (i_j-i_{j-1})(\sec \theta-1) + \tan \theta$.
Since $i_j-i_{j-1}\approx \frac{\theta^2}{2 \sigma_1}$, $\Gamma_{j}\approx \frac{\theta^4}{4 \sigma_1}+\theta$.

\begin{lemma}\label{prop:property_gamma_j}
    For $3 \leq j \leq m$, then $T_{i_j, j-1}$ is $\Gamma_{j}$ below of $T_{i_j, j-2}$.
\end{lemma}
\begin{proof}
\begin{figure}
    \centering
    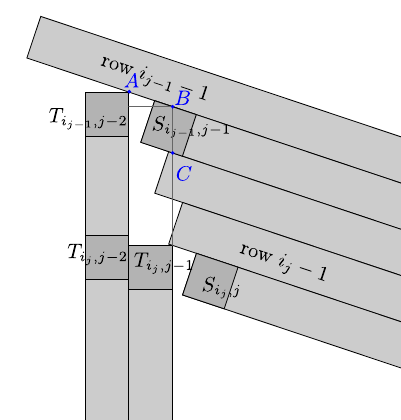
    \caption{Illustration for proof of \Cref{prop:property_gamma_j}. Some important squares are highlighted.}
    \label{fig:4}
\end{figure}
    We construct a few points as in \Cref{fig:4}.
    Formally,
    let $A$ be the corner of $T_{i_{j-1}, j-2}$ that touches the $i_{j-1}-1$-th row.
    Extend the right side of $T_{\bullet, j-1}$ upward to intersect the top and bottom side of row
    $S_{i_{j-1},\bullet}$ at $B$ and $C$ respectively.

    Then, $B$ is $1$ to the right and $\tan \theta$ below $A$.
    Also, $C$ is $\sec \theta$ below $B$.

    Therefore, $T_{i_j, j-1}$ is $(i_j-i_{j-1})\sec \theta + \tan \theta$ below
    $T_{i_{j-1}, j-2}$.
    We also have $T_{i_j, j-2}$ is $(i_j-i_{j-1})$ below $T_{i_{j-1}, j-2}$,
    so we get the desired result.
\end{proof}

We look at the $i_m$-th row of the packing after the modification above.
There are unit squares $T_{i_m, 1}, T_{i_m, 2}, \dots, T_{i_m, m-1}$
being perfectly axis-aligned and $S_{i_m, m}$ sloped by angle $\theta$.
See \Cref{fig:end_of_packing}.

\begin{figure}
    \centering
    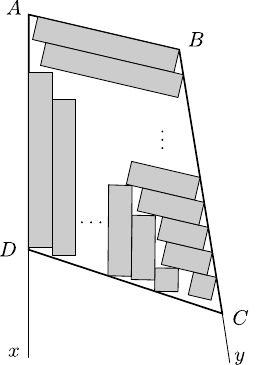
    \caption{Illustration of what happens as the vertical stacks $T_{\bullet, j}$ reaches the right edge.}
    \label{fig:end_of_packing}
\end{figure}

We delete all unit squares with row number greater than $i_m$.
Then we will construct point $C$ on ray $By$ and $D$ on ray $Ax$ such that
all unit squares constructed so far lies inside the quadrilateral $ABCD$.

\phantomsection\label{def:sigma_2}
Define $\sigma_2 = \arctan\big(\frac{1-\Delta_1}{\Delta_2}\cdot(\sec \theta-1)+\tan \theta\big)-\theta$.
Then we get
\[ \sigma_2 \approx \tan(\sigma_2+\theta) - \tan \theta = \frac{1-\Delta_1}{\Delta_2}\cdot(\sec \theta-1)
    \approx \frac{\theta^4}{4 \sigma_1}.
\]

Point $C$ and $D$ are constructed as follows.
The slope of segment $CD$ is determined as follows:
walking from $D$ to $C$, for each unit to the right,
it moves $\tan(\theta+\sigma_2)$ units down.
The vertical position of segment $CD$ will be determined
by the following discussion.

For simplicity of analysis, we delete $S_{i_m, m}$ from the packing.
This results in additional wasted area of $1$ unit.

From the construction above, it follows that:
\begin{proposition}
    Ray $BA$ and $CD$ intersect to the left of vertical line $AD$, and form with each other an angle $\sigma_2$.
\end{proposition}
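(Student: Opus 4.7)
The plan is to verify both claims by a direct coordinate computation, using only the slope of $AB$, the slope stipulated for $CD$, and the fact that $D$ lies on the downward vertical ray $Ax$ through $A$. I would place $A$ at the origin with the $y$-axis along the line $AD$; then the line through $AB$ has equation $y=-\tan\theta\cdot x$, and since $D=(0,-d)$ for some $d>0$ and $CD$ has slope $-\tan(\theta+\sigma_2)$ by construction, the line through $CD$ is $y=-d-\tan(\theta+\sigma_2)\cdot x$.

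To show the intersection lies to the left of $AD$, I would first note from the definition $\sigma_2=\arctan\!\big(\tfrac{1-\Delta_1}{\Delta_2}(\sec\theta-1)+\tan\theta\big)-\theta$ that $\sigma_2>0$, because $\sec\theta>1$ for $\theta\in(0,\pi/2)$ and $\tfrac{1-\Delta_1}{\Delta_2}>0$. Hence $\tan(\theta+\sigma_2)>\tan\theta$, so line $CD$ is strictly steeper than line $AB$; solving the two line equations simultaneously gives $x=-d/\big(\tan(\theta+\sigma_2)-\tan\theta\big)<0$, placing the intersection strictly to the left of the $y$-axis. The intersection lies on ray $BA$ beyond $A$ and on ray $CD$ beyond $D$ because, moving leftward from $x=0$, both lines rise and the steeper one catches up to the shallower one from below.

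For the angle claim, line $AB$ makes angle $\theta$ with the horizontal while line $CD$ makes angle $\theta+\sigma_2$; the directions of the rays $BA$ and $CD$ at the meeting point are simply the reverses of these line directions, so the angle between the two rays is $(\theta+\sigma_2)-\theta=\sigma_2$. The only mild subtlety is the sign check establishing $\sigma_2>0$; once that is in hand, the statement is essentially forced by the choice of slope for $CD$, so there is no genuine obstacle in the argument.
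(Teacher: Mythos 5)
Your proposal is correct, and it matches the paper's treatment: the paper states this proposition as following immediately from the construction (CD is given slope $\tan(\theta+\sigma_2)$ while $AB$ has slope $\tan\theta$, with $\sigma_2>0$ by its definition), and your coordinate computation simply spells out that same observation explicitly. There is no gap; the sign check $\sigma_2>0$ and the steeper-slope argument are exactly the content the paper leaves implicit.
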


\begin{figure}
    \centering
    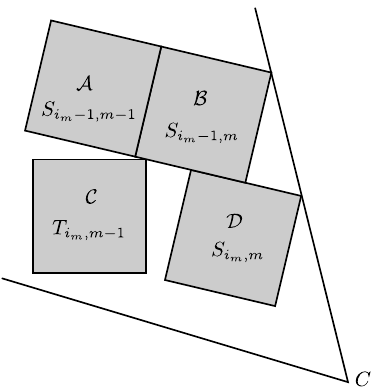
    \caption{Zooming in around vertex $C$.}
    \label{fig:zoom_in_D}
\end{figure}

Consider the unit squares near vertex $C$.
Let
$\mathcal B$, $\mathcal C$ and $\mathcal D$ be the unit squares
$S_{i_m-1, m}$,
$T_{i_m, m-1}$,
$S_{i_m, m}$ respectively, as in \Cref{fig:zoom_in_D}.

Recall that $T_{i_j, j-1}$ is $\Gamma_{j}$ below $T_{i_j, j-2}$.
Therefore $T_{i_m, j-1}$ is also $\Gamma_{j}$ below $T_{i_m, j-2}$,
so for each integer $1 \leq j \leq m-1$, $T_{i_m, j}$ is
\begin{align*}
\Gamma_{3} + \Gamma_{4} + \dots + \Gamma_{j} + \Gamma_{j+1} 
&= (i_{j+1}-i_2)(\sec \theta-1)+(j-1)\tan \theta \\
&= \left( \left\lceil \frac{1-\Delta_1}{\Delta_2} j \right\rceil-\left\lceil \frac{1-\Delta_1}{\Delta_2} \right\rceil \right) (\sec \theta-1)+(j-1)\tan \theta \\
&\leq \left( \frac{1-\Delta_1}{\Delta_2} (j-1) +1 \right) (\sec \theta-1)+(j-1)\tan \theta \\
&= \left( \frac{1-\Delta_1}{\Delta_2}\cdot(\sec \theta-1) +\tan \theta \right)\cdot(j-1) + (\sec \theta-1) \\
&= \tan(\theta+\sigma_2)\cdot(j-1) + (\sec \theta-1) \\
\end{align*}
below
$T_{i_m, 1}$.
As such, if we select the vertical position of segment $CD$ such that
point $D$ is $(\sec \theta-1)$ below the bottom left corner of $T_{i_m,1}$,
then all the $T_{i_m,j}$ unit squares will be above segment $CD$.

\subsection{Analysis of the Wasted Area}

\begin{figure}
    \centering
    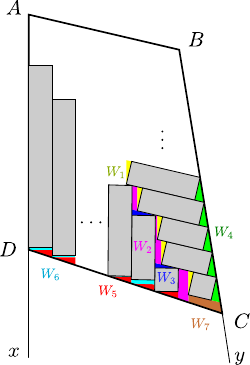
    \caption{Illustrations for analysis of wasted area.}
    \label{fig:waste_analysis}
\end{figure}

There are 7 groups of wasted areas.
See \Cref{fig:waste_analysis} for an illustration.
\begin{itemize}
    \item $W_1$, triangles to the left of each row $S_{i, \bullet}$ (colored \textcolor{yellow!50!black}{yellow}).
        There are $i_m$ of them, each has area $\frac{1}{2} \tan \theta$,
        so the total area is $O(\theta i_m)$.
    \item $W_3$, triangles above each column $T_{\bullet, j}$ (colored \textcolor{blue}{blue}).
        There are $m$ of them, each has area $\frac{1}{2}\tan \theta$,
        so the total area is $O(\theta m)$.
    \item $W_2$, small vertical strips to the left of $W_1$ (colored \textcolor{magenta}{magenta}).
        Since $i_j-i_{j-1}\in O(\frac{\theta^2}{\sigma_1}+1)$ for every $j$,
        the width of each strip is at most $O(\Delta_2 \cdot(\frac{\theta^2}{\sigma_1}+1)) = O(\theta^2+\sigma_1)$,
        so the total area is $O((\theta^2+\sigma_1) \cdot i_m)$.

        Here is more details why the width of each strip
        is at most $O(\Delta_2 \cdot(\frac{\theta^2}{\sigma_1}+1))$.
        For illustration, see \Cref{fig:4}.
        Any horizontal section of the magenta strip is to the left of a row $S_{i, \bullet}$ for some integer $i$.
        Find integer $j$ such that $i_{j-1} \leq i < i_j$.
        Then the width of this horizontal section of the magenta strip
        is the distance between the bottom left corner of $S_{i, j-1}$ and the
        right side of column $T_{\bullet, j-2}$.
        By definition of $i_{j-1}$, the square $S_{i_{j-1}-1, j-1}$ have $x$-coordinate
        of bottom left corner $< (j-2)$.
        Now look at the square right below it, $S_{i_{j-1}, j-1}$.
        By \Cref{lemma_def_deltas},
        the distance between the bottom left corner 
        and the right side of $T_{\bullet, j-2}$ is $< \Delta_2$.
        Similarly, when we look at the square $S_{i, j-1}$, which is $\leq i_j - i_{j-1}$ squares below $S_{i_{j-1}, j-1}$,
        the distance between the bottom left corner of $S_{i, j-1}$
        and the right side of $T_{\bullet, j-2}$ is $< \Delta_2 \cdot (i_j - i_{j-1} + 1)$.

    \item $W_6$, small horizontal strips below each column $T_{\bullet, j}$ (colored \textcolor{cyan}{cyan}).
        There are $m$ of them, the height of each is bounded by $O(\sec \theta-1)$
        (we have shown above during the placement of segment $CD$ that
        $T_{i_m,j}$ is $\leq\tan(\theta+\sigma_2) \cdot(j-1)+(\sec \theta-1)$ below $T_{i_m,1}$,
        using a similar argument we can also show
        $T_{i_m,j}$ is $\geq\tan(\theta+\sigma_2) \cdot(j-1)-(\sec \theta-1)$ below $T_{i_m,1}$),
        so the total area is $O(\theta^2 m)$.
    \item $W_4$, triangles to the right of each row of $S_{i, \bullet}$ (colored \textcolor{green!50!black}{green}).
        There are $i_m$ of them, each has area $\frac{1}{2}\tan(\theta+\sigma_1)$,
        so the total area is $O((\theta+\sigma_1) i_m)$.
    \item $W_5$, triangles below each column of $T_{\bullet, j}$ (colored \textcolor{red}{red}).
        There are $m$ of them, each has area $\frac{1}{2}\tan(\theta+\sigma_2)$,
        so the total area is $O((\theta+\sigma_2) m)$.
    \item $W_7$, unaccounted-for area below $S_{i_m, m}$ (colored \textcolor{brown}{brown}).
        This is $O(1)$ as long as all of $\theta$, $\sigma_1$, $\sigma_2$ are $\in o(1)$.
\end{itemize}

Summing them up, we get the total wasted area to be $
O((\frac{\theta^3}{\sigma_1}+\theta) \cdot m+1)
$.
This is because each angle $\theta$, $\sigma_1$, $\sigma_2$ is $\in O(1)$, and $4 \sigma_1 \sigma_2 \approx \theta^4$. Also this can be written more symmetrically as $O((m+i_m)\cdot \theta+1)$.

\begin{remark}
    \label{remark:alternative-wasted-area-calculation}
    There is an alternative way to calculate the wasted area:
    count the number of squares,
    then subtract that from the area of the trapezoid $ABCD$.

    Clearly the number of squares (including $S_{i_m, m}$) is $m \cdot i_m$.
    If we can show that the area of the trapezoid $ABCD$
    is $\leq m \cdot i_m + O(\theta \cdot (m + i_m) + 1)$,
    we would be able to conclude that the total wasted area
    is $O(\theta \cdot (m + i_m) + 1)$.

    Calculating the area of the trapezoid $ABCD$
    appears to be difficult. See \Cref{alternative_wasted_area_calculation}.
\end{remark}

\subsection{Lower Bound on the Wasted Area}
\label{subsec:tightly_packed_lower_bound}

As mentioned in \Cref{sec:summary_existing}, the existing method for packing the area between two parallel lines
a distance $x$ apart can be interpreted as giving an efficient packing method for a family of parallelograms.
We illustrate such a parallelogram in \Cref{fig:basic_parallelogram}.

If we have to pack the interior of such a parallelogram,
this packing method is in fact asymptotically optimal in certain cases.

\begin{proposition}
    Consider a parallelogram that can be perfectly packed by the method illustrated in 
    \Cref{fig:basic_parallelogram}.
    Let $x$ be its height.
    Suppose $\lceil x \rceil-x \in \Theta(1)$ and the width of the parallelogram is $\in \Theta(x)$.
    Then the internal angle of the parallelogram is $90^\circ \pm \Theta(1/\sqrt x)$,
    and the packing method has wasted area $\Theta(\sqrt x)$.
\end{proposition}

\begin{proof}
    Suppose the horizontal edge is parallel to the $x$-axis.
Let $\theta$ be the tilt of the almost-vertical edge,
and $w \in \Theta(x)$ be the number of almost-vertical stack of squares illustrated in \Cref{fig:basic_parallelogram}.
Then, the wasted area consists of $2w$ small triangles, each have area $\frac{1}{2} \tan \theta$.
Therefore, the total wasted area is $w \tan \theta \in \Theta(x \theta)$.
An easy calculation shows that if $\lceil x \rceil-x \in \Theta(1)$ then $\theta \in \Theta(1/\sqrt x)$.
\end{proof}

\begin{proposition}
    \label{prop_parallelogram_min_wasted_area}
    A parallelogram with height $x$, width $\Theta(x)$,
    and internal angle $90^\circ \pm \theta$ for $\theta \in \Theta(1/\sqrt x)$ has the minimum wasted area
    of a packing with unit squares $\Omega(\sqrt x)$.
\end{proposition}

The following is a more formal version
of the statement of \Cref{prop_parallelogram_min_wasted_area}.
For all functions $w_1(x), w_2(x) \in \Theta(x)$,
$\theta_1(x), \theta_2(x) \in \Theta(1/\sqrt x)$,
there exists a function $a(x) \in \Omega(\sqrt x)$
such that for any parallelogram with height $x$,
width between $w_1(x)$ and $w_2(x)$,
and one internal angle $90^\circ + \theta$ for $\theta_1(x) \leq \theta \leq \theta_2(x)$
the minimum wasted area of a packing of that parallelogram is $\geq a(x)$.

\begin{proof}
We prove that any packing requires wasted area $\in \Omega(x \theta)$.
Pick $\Theta(x)$ equidistant points each
along the left and bottom edge of the parallelogram,
and connect the corresponding points, getting $\Theta(x)$ parallel line segments.
This is depicted with red downwards-sloping line segments in \Cref{fig:basic_parallelogram}.
Let these paths be $\gamma_1$, $\gamma_2$, $\dots$, $\gamma_k$ with $k \in \Theta(x)$.

Because the width and the height of the parallelogram are both $\Theta(x)$, the paths are at a distance
$\Theta(1)$ apart.
Using \cite[Proposition 21]{bui2025squarepackingasymptoticallysmallest}
(with minor adaptation to make it work with two sides of the boundary instead of two unit squares)
on each of the paths $\gamma_1$, $\gamma_2$, $\dots$, $\gamma_k$,
because the left and bottom edges
are sloped by $\theta$ with respect to each other, for each such path,
the total wasted area in a region near each path is $\in \Omega(\theta)$.
Since the $k$ paths are at distance $\Theta(1)$ apart, each point is only near $O(1)$ paths,
therefore the total wasted area in any packing of the parallelogram
is $\in \Omega(k \theta) = \Omega(x \theta)$, finishing the proof.
\end{proof}

\begin{figure}
    \centering
    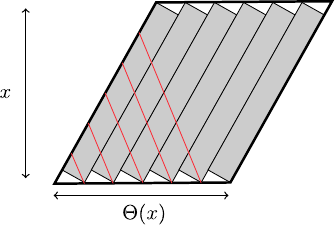
    \caption{Illustration of the tightly-packed parallelogram formed by a stack of squares packed between two parallel lines.}
    \label{fig:basic_parallelogram}
\end{figure}

Using a very similar argument, we can see that
when $m/i_m \in \Theta(1)$,
a quadrilateral with almost-vertical and almost-horizontal sides
with width $\Theta(m)$, height $\Theta(i_m)$,
and an interior angle $90^\circ - \Theta(\theta)$
must have wasted area $\Omega((m+i_m) \cdot \theta)$.
Therefore, for the quadrilaterals formed by our packing method described in this section,
our method is asymptotically optimal up to a multiplicative constant factor.

\section{Packing a Right Trapezoid}
\label{sec:right_trapezoid_packing}

We consider a right trapezoid with height $x$, base $\Theta(x^{\beta})$, slope of right edge $\Theta(x^{-\gamma})$,
where $\beta$ and $\gamma$ are some positive constants.
We will describe a packing method that results in
the wasted area being $\Theta(x^{1-\gamma/2})$ under some choices of $\beta$ and $\gamma$.

This packing method is illustrated in \Cref{fig:main_trapezoid_2}.
Informally, we first pack the cyan quadrilateral $E_0 G_0 M_0 N_0$ with the method in \Cref{sec:tightly_packed_quad},
leave a gap $N_0 M_0 G_1 E_1$ with integral height that can be almost perfectly packed
and segment $E_1 G_1$ has length slightly more than an integer,
which allows us to continue packing the cyan quadrilateral $E_1 G_1 M_1 N_1$ with the same method in \Cref{sec:tightly_packed_quad}.
In doing so, the edge $M_0 N_0$, and thus $E_1 G_1$, is slightly more tilted than $E_0 G_0$. Similarly, $E_2 G_2$ is slightly more tilted than $E_1 G_1$, etc.

Later, we will specialize to
$\gamma = \frac{1}{2}$ and $\beta = 1-\frac{\gamma}{2}=\frac{3}{4}$.
All notations used in this section are listed in \Cref{summary_notations_2}.

\begin{table}
\begin{tabular}{ccc}
    Symbol                            & Note \\ \hline
    $\beta$ & Top edge has length $\Theta(x^\beta)$ \\
    $\sigma_1$ & Slope of right edge \\
    $\gamma$ & $\sigma_1 \in \Theta(x^{-\gamma})$ \\
    $\theta$, $\theta_i$ & Slope of horizontal stacks \\
    $\omega$ & $\theta \in \Theta(x^{-\omega})$ \\
    $\sigma_2$ & $\sigma_2 \in \Theta(x^{-(4 \omega-\gamma)})$
\end{tabular}
\caption{Summary of notations in \Cref{sec:right_trapezoid_packing}.}
\label{summary_notations_2}
\end{table}

\subsection{Details of the Packing Method}

\begin{figure}
    \centering
    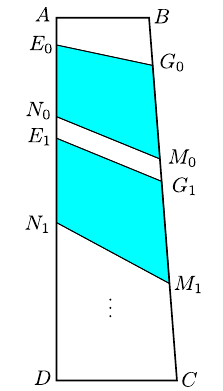
    \caption{Illustration for packing of a right trapezoid.
    Each cyan region is packed according to the method in \Cref{sec:tightly_packed_quad}.}
    \label{fig:main_trapezoid_2}
\end{figure}

We perform the following procedure. See \Cref{fig:main_trapezoid_2} for demonstration.

First, define $\omega = \frac{\gamma}{2}$ and pick $\theta \in \Theta(x^{-\omega})$, the exact constant factor to be decided later.
Then pick $E_0$ on segment $AD$, its exact position to be determined later.
Pick $G_0$ on segment $BC$ such that the angle $A E_0 G_0$ is $(90+\theta)^\circ$.
\begin{proposition}
    \label{prop:Ei_location}
    The set of locations of $E_0$ on segment $AD$ such that segment $E_0 G_0$ has length $\tan \theta$
    more than an integer is a discrete set of points, each $\Theta(x^{\gamma})$ spaced apart from its nearest neighbor.
\end{proposition}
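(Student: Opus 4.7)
The plan is to write $|E_0 G_0|$ explicitly as an affine function of the position of $E_0$ on $AD$, check that its slope is $\Theta(\sigma_1)$, and invert.

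First, I set up coordinates with $A$ at $(0,x)$ and $D$ at the origin, so that $AD$ is the vertical left edge and $E_0 = (0,y)$ with $y \in [0,x]$ is a free parameter. The angle condition $\angle AE_0G_0 = (90+\theta)^\circ$ forces the segment $E_0G_0$ to point into the trapezoid at angle $\theta$ below horizontal, so $G_0$ is the intersection with $BC$ of a ray from $E_0$ of known direction. A right-triangle calculation of the same flavour as the one for $\Delta_1$, $\Delta_2$, $\Delta_3$ in \Cref{fig:2} then yields
\[
|E_0 G_0| = \frac{b\cos\sigma_1 + (x-y)\sin\sigma_1}{\cos(\theta+\sigma_1)},
\]
where $b \in \Theta(x^\beta)$ is the length of the top edge $AB$ and $\sigma_1 \in \Theta(x^{-\gamma})$ is the angle $BC$ makes with the vertical.

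The crucial feature of this formula is that $|E_0 G_0|$ is a strictly monotone affine function of $y$ whose derivative has absolute value $\sin\sigma_1/\cos(\theta+\sigma_1)$. This lies in $\Theta(\sigma_1) = \Theta(x^{-\gamma})$, because $\theta,\sigma_1 \in o(1)$ keeps the denominator bounded away from $0$. Consequently the equation $|E_0 G_0| = n + \tan\theta$ has at most one solution in $y$ for each $n \in \mathbb{Z}$, and the gap in $y$ between two consecutive solutions is the reciprocal of the derivative, namely $\Theta(x^\gamma)$. Restricting $y$ to the compact interval $[0,x]$ leaves a discrete set with the same nearest-neighbour spacing, which is the claim.

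The only mild obstacle is bookkeeping: one must check that the parametrization is consistent with the figure, in particular that for the relevant range of $y$ the specified ray really lands on segment $BC$ (rather than on an extension of it or on the top or bottom edge), and that the sign of the derivative is as stated. Both follow immediately once the coordinates are fixed, using $\sigma_1 \geq 0$ and $\theta+\sigma_1 \in o(1)$, so there is no genuine mathematical difficulty beyond elementary trigonometry.
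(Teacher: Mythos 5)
Your proof is correct and follows essentially the same route as the paper, which justifies the proposition in one line by noting that $BC$ makes an angle $\Theta(x^{-\gamma})$ with $AD$, so that $|E_0G_0|$ varies at rate $\Theta(x^{-\gamma})$ as $E_0$ slides along $AD$; your explicit affine formula and its inversion simply make that observation quantitative.
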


This comes immediately from the fact that line $BC$ makes with line $AD$ an angle of $\Theta(x^{-\gamma})$.
Therefore, there exists a choice of $E_0$ such that segment $A E_0$ has length $\Theta(x^{\gamma})$.

Using this choice of $E_0$, construct a tightly-packed quadrilateral
(as described in \Cref{sec:tightly_packed_quad})
right below segment $E_0 G_0$. Let the two bottom vertices of it
be $M_0$ and $N_0$ respectively, with $M_0$ on $BC$ and $N_0$ on $AD$.

Let $\theta$, $\sigma_1$, $\sigma_2$ be as in \Cref{sec:tightly_packed_quad},
where we use quadrilateral $E_0 G_0 M_0 N_0$ here in place of
the quadrilateral $ABCD$ in \Cref{sec:tightly_packed_quad}.
Notice that the $\theta$ as in \Cref{sec:tightly_packed_quad} agrees with our definition of $\theta$ at the start of this section,
and $\sigma_1 \in \Theta(x^{-\gamma})$ is the slope of the right edge.

Then, define $\theta_1$ to be the angle that $N_0 M_0$ makes with the horizontal line. We have
$\theta_1 = \theta + \sigma_2$.

We will construct points $E_1$ and $G_1$ on segments $N_0 D$ and $BC$ respectively such that $E_1 G_1$ is parallel to $N_0 M_0$.

Arguing similar to \Cref{prop:Ei_location}, the set of possible locations of $E_1$ such that the length of $E_1 G_1$
is $\tan \theta_1$ more than an integer is a discrete set of points, each $\Theta(x^{\gamma})$ spaced apart from its nearest neighbor. Thus we can pick $E_1$ being one of those points such that segment $N_0 E_1$ has length $\Theta(x^{\gamma})$.

Move $E_1$ slightly downwards (by a length of $O(1)$) so that the segment $N_0 E_1$ has integral length. Move $G_1$ accordingly, keeping $E_1 G_1$ parallel to $N_0 M_0$.

Construct a quadrilateral $E_1 G_1 M_1 N_1$ similar to above.
Because of the movement of $E_1$ above, we might have to trim away a vertical strip of height approximately $E_1 N_1$
and width $O(x^{-\gamma})$ from the left side of the quadrilateral, near segment $E_1 N_1$.
Since the height of the right trapezoid is $x$, the total wasted area
caused by this trimming is $O(x^{1-\gamma})$. This is asymptotically smaller than the
$\Theta(x^{1-\gamma/2})$ that will be shown later, thus can be safely ignored.

Then, keep constructing quadrilaterals $\{ E_i G_i M_i N_i \}_i$ following the same procedure until the bottom edge is reached.
Let $k$ be the maximum integer for which the quadrilateral $E_k G_k M_k N_k$ was constructed
and entirely contained in the right trapezoid $ABCD$.

Next, for each gap $N_i M_i G_{i+1} E_{i+1}$,
we fill it with vertical stacks of unit squares with height equal to the length of segment $N_0 E_1$.
An illustration is in \Cref{fig:fill_gaps}.

\begin{figure}
    \centering
    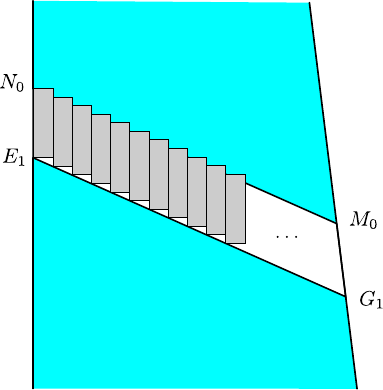
    \caption{Method of filling in the gaps $N_i M_i G_{i+1} E_{i+1}$. Illustrated with $i=0$.}
    \label{fig:fill_gaps}
\end{figure}

Note that there is some overlap with the region above $N_0 M_0$, but this is fine---%
recall the construction in \Cref{sec:tightly_packed_quad},
these columns merely extends the $T_{\bullet, j}$ columns.
The right end near $M_0 G_1$ is sloped, they can be filled in naively.

The top and bottom part can be filled in naively with waste proportional to the total perimeter
of quadrilaterals $A B G_0 E_0$ and $N_k M_k C D$.

\subsection{Analysis of the Wasted Area}

First, we consider the quadrilateral $E_0 G_0 M_0 N_0$.
We use the notation $\theta$, $\sigma_1$, $\sigma_2$, $m$ and $i_m$ for width and height as in \Cref{sec:tightly_packed_quad}.

\begin{proposition}
    $\sigma_2 \in \Theta(x^{-\gamma})$ and $i_m \in \Theta(x^\beta)$.
\end{proposition}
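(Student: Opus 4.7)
The plan is to read off both bounds directly from the asymptotic identities recorded in the summary-of-notations table of \Cref{sec:tightly_packed_quad}, specialized to the regime $\theta \in \Theta(x^{-\omega})$ with $\omega = \gamma/2$ and $\sigma_1 \in \Theta(x^{-\gamma})$.

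For $\sigma_2$, I would start from the definition
$\sigma_2 = \arctan\bigl(\tfrac{1-\Delta_1}{\Delta_2}(\sec\theta - 1) + \tan\theta\bigr) - \theta$ together with the estimate $\sigma_2 \approx \theta^4/(4\sigma_1)$ established just after that definition. Because both $\theta$ and $\sigma_1$ are $o(1)$, the expansions $\sec\theta - 1 = \theta^2/2 + O(\theta^4)$, $\Delta_1 = 1 - \theta^2/2 + O(\theta^4)$, $\Delta_2 = \sigma_1 + O(\sigma_1(\theta+\sigma_1)^2)$, and $\arctan u - \theta = u - \theta + O((u-\theta)^3)$ show that the approximation is in fact a $\Theta$-estimate, not merely a first-order one. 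Then $\theta^4 \in \Theta(x^{-4\omega}) = \Theta(x^{-2\gamma})$ and $\sigma_1 \in \Theta(x^{-\gamma})$ give $\sigma_2 \in \Theta(x^{-\gamma})$.

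For $i_m$, the same table records $i_m \approx \frac{\theta^2}{2\sigma_1} m$, and the same expansion argument promotes this to a $\Theta$-equivalence. So it remains to pin down $m$, which by construction (\Cref{sec:tightly_packed_quad}) satisfies $|E_0 G_0| = m + \tan\theta$. Here I would argue geometrically: $|AB| \in \Theta(x^\beta)$ is the top edge of the trapezoid, $E_0$ lies on $AD$ with $|A E_0| \in \Theta(x^\gamma)$, and the right edge $BC$ has slope $\sigma_1 \in \Theta(x^{-\gamma})$, so the horizontal distance from $E_0$ to line $BC$ differs from $|AB|$ by at most $|AE_0|\cdot\sigma_1 = \Theta(1)$; converting that distance to $|E_0 G_0|$ costs a factor $\sec\theta \in 1+o(1)$. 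Thus $m \in \Theta(x^\beta)$, and
\[
i_m \in \Theta\!\left(\frac{x^{-2\omega}}{x^{-\gamma}}\cdot x^\beta\right) = \Theta(x^{\gamma - 2\omega + \beta}) = \Theta(x^\beta),
\]
since $\gamma = 2\omega$.

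The only non-mechanical step is the bookkeeping for $|E_0 G_0|$: one must check that the $\Theta(x^\gamma)$ offset of $E_0$ combined with the $\Theta(x^{-\gamma})$ slope of $BC$ produces only an $O(1)$ correction to the top-edge length, which is harmlessly absorbed into the $\Theta(x^\beta)$ estimate because $\beta > 0$. Everything else is a direct substitution into the asymptotics already collected in \Cref{sec:tightly_packed_quad}.
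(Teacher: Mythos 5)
Your proposal is correct and follows essentially the same route as the paper: substitute $\theta \in \Theta(x^{-\omega})$ and $\sigma_1 \in \Theta(x^{-\gamma})$ into the relation $4\sigma_1\sigma_2 \approx \theta^4$ to get $\sigma_2 \in \Theta(x^{-\gamma})$, and combine $m \approx E_0G_0 \in \Theta(x^\beta)$ with $i_m \approx \frac{\theta^2}{2\sigma_1}m$ (equivalently $\sqrt{\sigma_2/\sigma_1}\,m$, as the paper writes it) to get $i_m \in \Theta(x^\beta)$. The extra bookkeeping you supply — that the $\Theta(x^\gamma)$ offset of $E_0$ against the $\Theta(x^{-\gamma})$ slope of $BC$ perturbs the top-edge length only by $O(1)$, and that the first-order approximations are genuine $\Theta$-estimates — is detail the paper leaves implicit, not a different argument.
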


\begin{proof}
    We have $m \approx E_0 G_0 \in \Theta(x^\beta)$.
    Since $\theta \in \Theta(x^{-\omega})$
    and $4 \sigma_1 \sigma_2 \approx \theta^4$,
    we get $\sigma_2 \in \Theta(x^{-(4\omega-\gamma)}) = \Theta(x^{-\gamma})$.
    So $i_m \approx \sqrt{\frac{\sigma_2}{\sigma_1}}m \in \Theta(x^\beta)$.
\end{proof}

Let $\theta_i$ be the angle that $E_i G_i$ makes with the horizontal line.
Note that $\theta_1$ agrees with the definition above, and $\theta_0 = \theta$.

We want to fill the whole trapezoid $ABCD$ with such quadrilaterals,
leaving the bottom region small. The following proposition describes
a sufficient condition to do that.
\begin{proposition}
    \label{prop:sufficiency_fill_whole_trapezoid}
    If $\beta \geq 1-\gamma$ and $1-\max(\beta, \gamma) \leq \frac{\gamma}{2}$,
    and $\theta$ is chosen with sufficiently small constant factor,
    then we can fill the whole trapezoid $ABCD$ with quadrilaterals as above,
    while keeping $\theta_i \in \Theta(x^{-\omega})$ for all $i$.
\end{proposition}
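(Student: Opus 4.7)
The plan is to proceed by induction on $i$, maintaining the invariant $\theta_i \in [\theta, C\theta]$ for a constant $C$ only slightly greater than $1$; this will ensure $\theta_i \in \Theta(x^{-\omega})$ uniformly in $i$. The base case $\theta_0 = \theta$ is immediate by construction. For the inductive step, assuming $\theta_i$ is in range, I would verify in sequence: (a) the top edge $E_i G_i$ has length $m^{(i)} \in \Theta(x^\beta)$, using the first hypothesis $\beta \geq 1-\gamma$ to bound the horizontal variation of the right edge over the full height $x$ by $O(x \cdot \sigma_1) = O(x^{1-\gamma}) = O(x^\beta)$, so widths stay in $\Theta(x^\beta)$ throughout; (b) the packing of \Cref{sec:tightly_packed_quad} applies to $E_i G_i M_i N_i$ with angle $\theta_i$, right-edge slope $\sigma_1 \in \Theta(x^{-\gamma})$, yielding $\sigma_2^{(i)} \approx \theta_i^4/(4\sigma_1) \in \Theta(x^{-\gamma})$ (using $\omega = \gamma/2$) and quadrilateral height $i_m^{(i)} \approx \theta_i^2 m^{(i)}/(2\sigma_1) \in \Theta(x^\beta)$; (c) by an argument identical to \Cref{prop:Ei_location} applied to the new tilt $\theta_i + \sigma_2^{(i)}$, the next valid position $E_{i+1}$ can be chosen within vertical distance $\Theta(x^\gamma)$ of $N_i$. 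The update is then $\theta_{i+1} = \theta_i + \sigma_2^{(i)}$.

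The decisive accounting tracks total vertical distance and total angular drift simultaneously. Each iteration consumes vertical height $\Theta(x^\beta) + \Theta(x^\gamma) = \Theta(x^{\max(\beta,\gamma)})$ (the quadrilateral itself plus the gap above the following quadrilateral's top edge), so the number of iterations needed to reach segment $CD$ is $k \in \Theta(x^{1-\max(\beta,\gamma)})$. The accumulated drift is
\[
\theta_k - \theta_0 \;=\; \sum_{i=0}^{k-1} \sigma_2^{(i)} \;\in\; \Theta\bigl(k \cdot x^{-\gamma}\bigr) \;=\; \Theta\bigl(x^{1-\max(\beta,\gamma)-\gamma}\bigr),
\]
which is $O(x^{-\gamma/2}) = O(\theta)$ precisely by the second hypothesis $1-\max(\beta,\gamma) \leq \gamma/2$. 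Since $\sigma_2^{(i)}$ scales as the fourth power of $\theta_i$, shrinking the constant factor of $\theta$ shrinks the hidden drift constant as that fourth power, so for sufficiently small constant the total drift falls below $(C-1)\theta$, closing the induction.

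The main obstacle I expect is the self-referential nature of the drift estimate: $\sigma_2^{(i)}$ depends on the running $\theta_i$ rather than on $\theta_0$, so one cannot simply sum fixed increments. The escape is that the maintained upper bound $\theta_i \leq C\theta$ only inflates each $\sigma_2^{(i)}$ by the constant factor $C^4$, which can be absorbed into the drift constant before choosing the constant factor of $\theta$ small enough to close the loop. A secondary bookkeeping step is to define $k$ as the largest index for which the quadrilateral $E_k G_k M_k N_k$ still lies in $ABCD$ and to check that the residual strip $N_k M_k C D$ has height $O(x^{\max(\beta,\gamma)})$, so the induction terminates cleanly; its naive packing, like that of the top region $A B G_0 E_0$, is handled in the wasted-area analysis but does not affect the invariants asserted by this proposition.
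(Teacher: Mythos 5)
Your proposal is correct and follows essentially the same route as the paper: the paper's main-text proof counts $\Theta(x^{1-\max(\beta,\gamma)})$ quadrilaterals of height-plus-gap $\Theta(x^\beta+x^\gamma)$ and bounds the accumulated drift $\sum_i \sigma_2^{(i)} \in O(x^{-\gamma})\cdot O(x^{\gamma/2}) = O(x^{-\omega})$, exactly your accounting, while your explicit induction with the invariant $\theta_i \in [\theta, C\theta]$ and absorption of the $C^4$ factor is the same bootstrapping the paper carries out with concrete constants in \Cref{careful_analysis_pack_trapezoid}.
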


Here, $\beta \geq 1-\gamma$ is just a convenient assumption, so that each length $E_i G_i$
is $\in \Theta(x^\beta)$. (Otherwise for sufficiently large $i$, length of $E_i G_i$
may grow to $\Theta(x^{1-\gamma})$. See also discussion in \Cref{subsec:triangle_like_trapezoid}.)

Note that the height of each quadrilateral plus a gap is
$\Theta(x^\beta+x^\gamma)$.
As such, assuming the height of the quadrilaterals remains roughly the same,
we need $\Theta(x^{1-\max(\beta, \gamma)})$ such quadrilaterals.

If $1-\max(\beta, \gamma) \leq \frac{\gamma}{2}$,
the angle $\theta_i$
remains in $\Theta(x^{-\omega})$ since the sum of $\sigma_2$ values over all quadrilaterals
are $\in O(x^{-\gamma}) \cdot O(x^{\gamma/2}) = O(x^{-\gamma/2}) = O(x^{-\omega})$.

The formal proof follows.
We have $\sigma_1 \in \Theta(x^{-\gamma})$,
and the length of the top edge is $\in \Theta(x^\beta)$ where $\beta = 1-\frac{\gamma}{2}$.

Because we only need to consider sufficiently large $x$,
we can assume there are constants $0<l<u$ such that
$l \cdot x^{-\gamma} < \sin \sigma_1 < \sigma_1 < u \cdot x^{-\gamma}$,
and the top edge has length $> l \cdot x^\beta+1$.

Let $d = 0.12 \cdot l^2 u^{-1}$.
Pick $\theta_0 = \arctan(d \cdot x^{-\omega})$.

Now assume $x$ is large enough such that the conditions above hold,
and in addition, $u x^{-\gamma}+2d x^{-\omega} < \frac{1}{8}$.

For each $i \geq 1$, define $\theta_i =
\arctan\Bigl(\frac{1-\cos \theta_{i-1}}{\sec(\theta_{i-1}+\sigma_1)\sin \sigma_1}
\cdot(\sec \theta_{i-1}-1)+\tan \theta_{i-1}\Bigr) $.

Then
\[
    \tan \theta_i - \tan \theta_{i-1} = \frac{1}{\sec(\theta_{i-1}+\sigma_1)}\cdot \frac{(1-\cos \theta_{i-1}) \cdot (\sec \theta_{i-1}-1)}{\sin \sigma_1}.
\]

\begin{lemma}
    If $0<\theta+\sigma_1<\frac{1}{8}$ then $1 < \sec(\theta+\sigma_1) < 1.01$.
\end{lemma}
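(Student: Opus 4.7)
The plan is to reduce this to a standard calculus bound on $\cos$. Since the function $\sec$ is increasing on $[0,\pi/2)$ and since $\theta+\sigma_1$ lies in the open interval $(0,1/8) \subset (0,\pi/2)$, it suffices to verify the endpoint bounds $\sec 0 = 1$ and $\sec(1/8) < 1.01$. The lower bound is then immediate from $\cos y < 1$ for $y \in (0,\pi/2)$, which gives $\sec y > 1$.

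For the upper bound, I would invoke the well-known inequality $\cos y \geq 1 - \tfrac{y^2}{2}$ valid for all real $y$ (this follows from the alternating Taylor series together with the fact that the series is alternating with decreasing terms once $|y|<\sqrt{12}$, or more simply from integrating $\sin t \leq t$ twice starting from $0$). Applied at $y = 1/8$, this gives $\cos(1/8) \geq 1 - \tfrac{1}{128} = \tfrac{127}{128}$. Taking reciprocals yields $\sec(1/8) \leq \tfrac{128}{127}$, and a direct numerical check shows $\tfrac{128}{127} < 1.01$ since $1.01 \cdot 127 = 128.27 > 128$.

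Combining these, for $0 < \theta+\sigma_1 < 1/8$ we have $1 < \sec(\theta+\sigma_1) \leq \sec(1/8) \leq \tfrac{128}{127} < 1.01$, as desired. There is no real obstacle here — the only thing to double-check is that one does not want a sharper bound in the following arguments, in which case one would replace $1.01$ by $\tfrac{128}{127}$ or compute a tighter Taylor estimate. Since the ambient analysis only needs a small constant bound on $\sec$ to turn $\sec$-factors into $1 + O(\theta+\sigma_1)$ corrections, the bound $1.01$ is amply sufficient.
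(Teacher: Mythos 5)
Your proof is correct. The paper itself states this lemma without proof (it is treated as a routine numerical fact in the appendix), so there is no argument to compare against; your route via monotonicity of $\sec$ on $[0,\pi/2)$ together with the bound $\cos y \geq 1-\tfrac{y^2}{2}$, giving $\sec(1/8) \leq \tfrac{128}{127} < 1.01$, is exactly the kind of standard verification the paper implicitly relies on, and all the numerical checks ($1.01\cdot 127 = 128.27 > 128$) are accurate.
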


\begin{lemma}
    If $0<\theta<\frac{1}{8}$ then $(1-\cos \theta)(\sec \theta-1) < \frac{(\tan \theta)^4}{4}$.
\end{lemma}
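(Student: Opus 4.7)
The plan is purely algebraic: rewrite both sides as rational expressions in $c = \cos\theta$, clear the common positive factor $(1-\cos\theta)^2$, and reduce to an inequality between polynomials in $c$ which factors nicely.

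First I would observe that the left hand side simplifies as
\[
(1-\cos\theta)(\sec\theta-1) = (1-\cos\theta)\cdot\frac{1-\cos\theta}{\cos\theta} = \frac{(1-\cos\theta)^2}{\cos\theta},
\]
while the right hand side expands, via $\sin^2\theta = (1-\cos\theta)(1+\cos\theta)$, as
\[
\frac{\tan^4\theta}{4} = \frac{\sin^4\theta}{4\cos^4\theta} = \frac{(1-\cos\theta)^2(1+\cos\theta)^2}{4\cos^4\theta}.
\]
Since $0 < \theta < \tfrac{1}{8} < \tfrac{\pi}{2}$ we have $\cos\theta > 0$ and $(1-\cos\theta)^2 > 0$, so dividing the claimed inequality by $(1-\cos\theta)^2$ and multiplying by $4\cos^4\theta$ reduces it to the polynomial inequality
\[
4\cos^3\theta < (1+\cos\theta)^2.
\]

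Writing $c = \cos\theta$, this is $4c^3 - (1+c)^2 < 0$, i.e.\ $4c^3 - c^2 - 2c - 1 < 0$. A direct factoring gives
\[
4c^3 - c^2 - 2c - 1 = (c-1)(4c^2 + 3c + 1),
\]
which is easily verified by expansion. For $0 < \theta < \tfrac{1}{8}$ we have $c \in (\cos\tfrac{1}{8}, 1) \subset (0,1)$, so $c-1 < 0$, while $4c^2 + 3c + 1 > 0$ for all $c > 0$. Hence the product is strictly negative, completing the proof.

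There is no real obstacle here; the only thing to watch is the direction of the inequality after clearing denominators (all factors removed are strictly positive under the hypothesis $0<\theta<\tfrac{1}{8}$), and the slightly lucky fact that the cubic $4c^3 - c^2 - 2c - 1$ has $c=1$ as a root so that it factors over the rationals. Note also that equality holds in the limit $\theta \to 0$, so the bound is tight to leading order, consistent with the approximation $(1-\cos\theta)(\sec\theta-1) \approx \tfrac{\theta^4}{4}$ used heuristically earlier in the paper.
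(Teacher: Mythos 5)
Your proof is correct: the reduction to $4\cos^3\theta<(1+\cos\theta)^2$ is an equivalence (all cleared factors are strictly positive since $0<\theta<\tfrac18<\tfrac\pi2$), the factorization $4c^3-c^2-2c-1=(c-1)(4c^2+3c+1)$ checks out, and the sign argument on $(0,1)$ is airtight. There is nothing in the paper to compare against: this lemma is one of several auxiliary estimates stated without proof in the appendix (\Cref{careful_analysis_pack_trapezoid}), so your argument simply supplies the missing verification. It is worth noting that your proof never uses the bound $\theta<\tfrac18$ beyond guaranteeing $\cos\theta\in(0,1)$; the inequality in fact holds for all $\theta\in\bigl(0,\tfrac\pi2\bigr)$, whereas the paper's neighbouring lemmas (e.g.\ $1-\cos\theta>0.49\tan^2\theta$) genuinely need a smallness restriction, which is presumably why the author stated all three with the uniform hypothesis $\theta<\tfrac18$.
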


\begin{lemma}
    If $0<\theta<\frac{1}{8}$ then $1-\cos \theta > 0.49 (\tan \theta)^2$.
\end{lemma}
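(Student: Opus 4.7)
The plan is to reduce the inequality to an algebraic one in $c = \cos\theta$ by clearing denominators. Writing $\tan^2\theta = \sin^2\theta/\cos^2\theta = (1-\cos\theta)(1+\cos\theta)/\cos^2\theta$ and dividing the desired inequality by the positive quantity $1-\cos\theta$, the claim $1 - \cos\theta > 0.49\tan^2\theta$ becomes equivalent to
\[
    \frac{\cos^2\theta}{1+\cos\theta} > 0.49.
\]
So it suffices to bound this ratio from below on the interval $(0, 1/8)$.

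First I would observe that the function $f(c) = c^2/(1+c)$ is strictly increasing on $(0,1)$, as $f'(c) = (c^2 + 2c)/(1+c)^2 > 0$; since $\cos\theta$ is decreasing on $[0, \pi/2]$, the composition $f(\cos\theta)$ is decreasing in $\theta$ on $[0, 1/8]$, and the infimum on our interval is attained at $\theta = 1/8$. Next, I would use the standard Taylor bound $\cos\theta \geq 1 - \theta^2/2$ (valid because the Maclaurin series of $\cos$ is alternating with decreasing terms for $|\theta| \leq 1$) to conclude $\cos(1/8) \geq 1 - 1/128 = 127/128$. Plugging into $f$ gives
\[
    f(\cos(1/8)) \;\geq\; f(127/128) \;=\; \frac{127^2}{128 \cdot 255} \;=\; \frac{16129}{32640},
\]
and a direct comparison shows $16129/32640 > 0.49$ (since $0.49 \cdot 32640 = 15993.6 < 16129$), which closes the argument.

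The only mildly delicate step is the last numerical check, because the bound $f(127/128) \approx 0.4942$ is only just above the threshold $0.49$; this is why the hypothesis $\theta < 1/8$ (rather than something noticeably larger) is stated. The monotonicity reduction and the Taylor lower bound on $\cos(1/8)$ are both routine, so no serious obstacle is anticipated.
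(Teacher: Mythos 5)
Your proof is correct. The reduction to the equivalent inequality $\cos^2\theta/(1+\cos\theta) > 0.49$ (after dividing by $1-\cos\theta>0$) is exact, the monotonicity of $c \mapsto c^2/(1+c)$ combined with the monotonicity of $\cos$ is applied in the right direction, the alternating-series bound $\cos(1/8) \geq 1 - \tfrac{1}{128} = \tfrac{127}{128}$ is valid, and the final arithmetic $\tfrac{16129}{32640} > 0.49$ checks out. Note that the paper itself states this lemma without any proof (it appears in the appendix as one of several routine numerical estimates), so there is no argument of the authors to compare against; your write-up supplies the missing verification, and correctly identifies the only delicate point, namely that the constant $0.49$ leaves very little slack near $\theta = 1/8$, where the ratio $(1-\cos\theta)/\tan^2\theta$ is only about $0.494$.
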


The three lemmas above can be verified with a combination of interval arithmetic and
differentiating the expression, noticing equality holds at $\theta = 0$.

\begin{lemma}
    \label{technical_lemma_6}
    For $0 \leq i \leq \lfloor \frac{l}{4d^3} x^\omega\rfloor+1$,
    $\tan \theta_i \leq \tan \theta_0 + \frac{4d^4 i}{l} x^{-\gamma}$.
\end{lemma}
\begin{proof}
    The statement is true for $i = 0$.

    We prove by induction.
    Assume $\tan \theta_{i-1} \leq \tan \theta_0 + \frac{4 d^4 (i-1)}{l} x^{-\gamma}$.
    Then since $i-1 \leq \lfloor \frac{l}{4 d^3} x^\omega \rfloor$,
    $\tan \theta_{i-1} \leq 2d \cdot x^{-\omega} < \frac{1}{8}$,
    so $\tan \theta_i - \tan \theta_{i-1} < \frac{(\tan \theta_{i-1})^4}{4 \sin \sigma_1} < \frac{(2d x^{-\omega})^4}{4 l x^{-\gamma}}
    \leq \frac{4 d^4}{l} x^{-\gamma}$, so we're done.
\end{proof}

\begin{lemma}
    The total height of the first $\bigl\lceil\frac{l}{4d^3} x^\omega\bigr\rceil$ trapezoids
    is $\geq x$.
\end{lemma}
\begin{proof}
    Consider a particular trapezoid with slope of top edge being $\theta_i$, where $0 \leq i < \bigl\lceil \frac{l}{4d^3}x^\omega \bigr\rceil$.
    The number of columns $m$ is $\geq$ the length of the top edge, which is $>l \cdot x^\beta+1$.
    Therefore the number of rows is $\geq \Bigl\lceil \frac{(m-1)(1-\cos \theta_i)}{\sec(\theta_i+\sigma_1)\sin \sigma_1}\Bigr\rceil+1$.
    This is $\geq \frac{l \cdot x^\beta \cdot(1-\cos \theta_i)}{\sec(\theta_i+\sigma_1) \sin \sigma_1}
    \geq \frac{l \cdot x^\beta \cdot 0.49 (\tan \theta_i)^2}{1.01 \cdot u \cdot x^{-\gamma}}$.
    Since each $\tan \theta_i$ is $\geq \tan \theta_0 = d \cdot x^{-\omega}$,
    each number of rows above is $\geq 0.48 x^{1-\omega} d^2 \frac{l}{u}$.

    Therefore the total height is $\geq \frac{l}{4d^3}x^\omega \cdot 0.48 x^{1-\omega} d^2 \frac{l}{u} = 0.12 \frac{l^2}{du} x = x$.
\end{proof}

We have shown that we need no more than $\bigl\lceil\frac{l}{4d^3} x^\omega\bigr\rceil$ quadrilaterals to
fill the whole trapezoid $ABCD$, and the first
$\bigl\lceil\frac{l}{4d^3} x^\omega\bigr\rceil$ quadrilaterals have $\theta_i \in \Omega(x^{-\omega})$,
so \Cref{prop:sufficiency_fill_whole_trapezoid} is proven.

Next, we analyze the wasted area.

The top and bottom part have wasted area $O(x^\beta+x^\gamma)$.

There are $O(x^{1-\max(\beta, \gamma)})$ such quadrilaterals, and roughly as many gaps.
For each quadrilateral, the waste is $O((m+i_m) \theta) = O(x^{\beta-\omega})$.
For each gap, the waste
caused by the naive filling near the segments $M_i G_{i+1}$
is $O(x^\gamma)$, and the waste below each vertical stack can be discounted because they're equal to the
amount of space reused by overlapping with the quadrilateral above $N_0 M_0$.

Adding them up, the total wasted area is
\[
    O\bigl( x^\beta + x^\gamma + x^{1-\max(\beta, \gamma)} \cdot (x^{\beta-\gamma/2} + x^\gamma) \bigr).
\]

Finally, specialize to $\gamma = \frac{1}{2}$ and $\beta = \frac{3}{4}$.
All hypotheses are satisfied, and the total wasted area is $O(x^{3/4})$.

\section{Reduction from Packing a Square to Packing a Right Trapezoid}
\label{trapezoid_reduction}

We will state a proposition which describes a packing method that allows one to reduces the problem of packing 
a square to the problem of packing a right trapezoid, special cases of which has already been used several times in previous works.

Define $W_{\beta, \epsilon}(x) = x^{\frac{2 \beta}{2 \beta+1}} \log^{\frac{\epsilon}{2 \beta+1}} x$.
We would like to note that $\frac{2 \beta}{2 \beta+1}$ is the harmonic mean of $\frac{1}{2}$ and $\beta$.
\begin{proposition}
    \label{prop:type2_to_type1}
    If there exists
    real $\frac{1}{2}<\beta<1$,
    real $0<\nu<\beta+\frac{1}{2}$,
    real $\epsilon$
    such that
    for all real $m$,
    for all $w \in \Theta(m^\nu)$,
    the right trapezoid with height $m$, smaller base $w$, larger base $w+\Theta(\sqrt m)$
    can be packed with wasted area $O(m^\beta \log^\epsilon m)$,
    then $W(x) \in O(W_{\beta, \epsilon}(x))$.
\end{proposition}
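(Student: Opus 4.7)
\medskip

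The plan is to pack the square of side $x$ by choosing a scale parameter $m = \Theta(x^{2/(2\beta+1)})$, since this is exactly the scale at which a single trapezoid's waste $O(m^\beta)$ equals the target exponent $x^{2\beta/(2\beta+1)}$. At this scale the trapezoid has smaller base $w \in \Theta(m^\nu)$ and larger base $w + \Theta(\sqrt m)$, and the hypothesis of the proposition supplies a packing with waste $O(m^\beta \log^\epsilon m)$.

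The core construction is to tile the square by horizontal strips of height $m$, and within each strip tile by alternating right trapezoids: a trapezoid with its larger base on the bottom is placed next to one with its larger base on top, so that their slanted sides coincide and the pair fills a rectangular block of width $2w + \Theta(\sqrt m)$ and height $m$. Concatenating such pairs packs a strip exactly, and stacking strips packs most of the square. The flexibility $w \in \Theta(m^\nu)$ plus the $\Theta(\sqrt m)$ slack in the extra base lets me adjust the block width so that the rows tile a strip with no leftover, and I will similarly adjust $m$ so that $x/m$ is (near) an integer, eliminating the top residual strip up to $O(1)$.

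The main quantitative obstacle is controlling the total waste. A direct count gives $N = \Theta(x^2/(mw))$ trapezoids, and summing internal wastes yields $O(x^2 m^{\beta-\nu-1}\log^\epsilon m)$, which at $m = \Theta(x^{2/(2\beta+1)})$ does \emph{not} immediately match the target $O(x^{2\beta/(2\beta+1)} \log^{\epsilon/(2\beta+1)} x)$ for $\nu$ inside the allowed range $(0,\beta+\tfrac{1}{2})$. To close this gap I plan to apply the construction recursively: after packing most of the square using trapezoids at scale $m_0 = \Theta(x^{2/(2\beta+1)})$, the residual boundary and alignment regions are themselves rectangles or right trapezoidal shapes of a smaller effective size $x_1 \ll x$, which I pack by re-invoking the whole reduction at scale $m_1 = \Theta(x_1^{2/(2\beta+1)})$, and so on. Because the sizes $x_i$ shrink geometrically, the waste contributions $x_i^{2\beta/(2\beta+1)}$ sum to $O(x^{2\beta/(2\beta+1)})$, and the fractional log exponent $\epsilon/(2\beta+1)$ emerges because each level contributes only a $\log^{\epsilon/(2\beta+1)}$ share of the total polylog budget — this is where the appearance of the harmonic mean $\tfrac{2\beta}{2\beta+1}$ of $\tfrac12$ and $\beta$ comes in naturally.

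The hard part will be step four: rigorously setting up the recursion so that the boundary produced at level $i$ is again of the form required by the hypothesis (a right trapezoid of height $m_i$ with smaller base $\Theta(m_i^\nu)$ and larger base offset by $\Theta(\sqrt{m_i})$), rather than some exotic shape that is not covered by the hypothesis. The $\Theta(\sqrt m)$ slack is critical here because it precisely matches the $\Theta(1/\sqrt m)$ slope that arises naturally at the boundary of tilted packings, so residual regions at each recursion level are forced to have the right geometry; I expect that showing this matching formally — and verifying that the alignment choices $w \in \Theta(m^\nu)$ can always be made so that the residual at level $i+1$ is dimensioned correctly — will be the technical core of the proof.
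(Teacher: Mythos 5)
Your approach has a genuine gap, and it is structural rather than technical. The reduction this proposition encodes uses the hypothesis trapezoid only $O(1)$ times, not $\Theta\bigl(x^2/(mw)\bigr)$ times: the bulk of the square is packed by the trivial integer grid with zero waste, the fractional part of $x$ is absorbed along the boundary by stacks of $\approx m$ unit squares tilted by $\Theta(m^{-1/2})$ (this is exactly where the $\Theta(\sqrt m)$ offset between the two bases comes from, and it contributes the term $O(x/\sqrt m)$), and the hypothesis is invoked only for the bounded number of leftover right trapezoids of height $m$ at the ends of those tilted regions (the condition $\nu<\beta+\tfrac12$ merely guarantees $m^\nu\ll x$ so such a trapezoid fits inside the square). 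The bound then follows from the single balance $m^\beta\log^\epsilon m \asymp x/\sqrt m$, i.e.\ $m=(x\log^{-\epsilon}x)^{2/(2\beta+1)}$, which produces the harmonic-mean exponent $\tfrac{2\beta}{2\beta+1}$ and the $\log^{\epsilon/(2\beta+1)}$ factor. Your construction contains no mechanism generating the $x/\sqrt m$ term, so the optimization that defines $W_{\beta,\epsilon}$ never arises; your choice of $m$ is justified only by matching a single trapezoid's waste to the target, which is not where the exponent comes from.

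Quantitatively, your own count already shows the plan fails: tiling the whole square with trapezoid pairs uses $\Theta\bigl(x^2/(mw)\bigr)$ copies and incurs bulk waste $\Theta\bigl(x^2 m^{\beta-\nu-1}\log^\epsilon m\bigr)$, whose exponent at $m=\Theta(x^{2/(2\beta+1)})$ is $\tfrac{6\beta-2\nu}{2\beta+1}$; this exceeds $\tfrac{2\beta}{2\beta+1}$ whenever $\nu<2\beta$, and since $\nu<\beta+\tfrac12<2\beta$ for $\beta>\tfrac12$, it exceeds it for every admissible $\nu$. The proposed recursion cannot close this gap: recursing only re-packs \emph{residual} regions at smaller scales, whereas the excess waste here is already committed inside the bulk at the top level, in regions that are packed once and never revisited. (Nothing forces you to pay that cost at all --- the bulk can be covered exactly by the unit grid.) The geometric pairing of two trapezoids into a rectangle is fine, but it is solving the wrong problem; the proof you need is the short one above, not a tiling-plus-recursion argument.
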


This trapezoid is roughly the same as a ``type 2'' shape in \cite{Chung_2009, wang2016newresultpackingunit},
except that we make the constant factor implicit rather than explicit.

The wasted area is $O(m^\beta \log^\epsilon m + \frac{x}{\sqrt m})$.
By selecting $m = (x \log^{-\epsilon}x)^{\frac{2}{2\beta+1}}$,
the quantity above is minimized, being $O(x^{\frac{2 \beta}{2 \beta+1}} \log^{\frac{\epsilon}{2 \beta+1}} x) = W_{\beta, \epsilon}(x)$.

Note that $\nu$ cannot be too large otherwise it may happen that $m^\nu > x$.

We see this being applied in previous results as follows.
Note that $\frac{4-\sqrt 2}{7}<1$.

\begin{center}
\begin{tabular}{cccc}
    Article                            & $m^\beta \log^\epsilon m$      & Choice of $m$& $W_{\beta, \epsilon}(x)$ \\ \hline
    \cite{Erdos_1975}                  & $m^{7/8}$                       & $x^{8/11}$   & $x^{7/11}$ \\
    \cite{Chung_2009}
                                       & $m^{\frac{2+\sqrt 2}{4}} \log m$& $x^{2-2 \alpha}$, with $\alpha=\frac{3+\sqrt 2}{7}$ & $x^{\frac{3+\sqrt 2}{7}}\log^{\frac{4-\sqrt 2}{7}} x$
    \\
    \cite{wang2016newresultpackingunit}& $m^{5/6}$                       & $x^{3/4}$         & $x^{5/8}$
\end{tabular}
\end{center}

In this article, when $\beta = \frac{3}{4}$, we get:
\begin{theorem}
    $W(x) \in O(x^{3/5})$.
\end{theorem}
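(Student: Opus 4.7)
The plan is to observe that the theorem is essentially a direct substitution of the right-trapezoid packing result from \Cref{sec:right_trapezoid_packing} into the reduction of \Cref{prop:type2_to_type1}. The specialization at the end of \Cref{sec:right_trapezoid_packing} (taking $\gamma = \frac{1}{2}$ and $\beta = \frac{3}{4}$) gives a packing of a right trapezoid of height $m$, smaller base $\Theta(m^{3/4})$, and right-edge slope $\Theta(m^{-1/2})$, with wasted area $O(m^{3/4})$. Since the height is $m$ and the slope is $\Theta(m^{-1/2})$, the horizontal offset along the slanted side is $m \cdot \Theta(m^{-1/2}) = \Theta(\sqrt m)$, so the larger base equals the smaller base plus $\Theta(\sqrt m)$. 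This is exactly the shape required as input to \Cref{prop:type2_to_type1}.

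Next, I would verify the hypotheses of \Cref{prop:type2_to_type1} with the parameters $\beta = \tfrac{3}{4}$, $\nu = \tfrac{3}{4}$, $\epsilon = 0$. The requirement $\tfrac{1}{2} < \beta < 1$ holds since $\tfrac{1}{2} < \tfrac{3}{4} < 1$, and the requirement $0 < \nu < \beta + \tfrac{1}{2}$ holds since $0 < \tfrac{3}{4} < \tfrac{5}{4}$. The proposition then yields $W(x) \in O(W_{3/4, 0}(x))$, and expanding the definition gives
\[
    W_{3/4, 0}(x) = x^{\frac{2 \cdot 3/4}{2 \cdot 3/4 + 1}} = x^{\frac{3/2}{5/2}} = x^{3/5},
\]
which is exactly the claimed bound.

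Since both ingredients are already established earlier in the paper, there is no genuine obstacle; the only point to be careful about is matching the parameterizations. Specifically, one should double-check that the right trapezoid produced in \Cref{sec:right_trapezoid_packing} really does have its two parallel sides differing by $\Theta(\sqrt m)$ (rather than some other function of $m$) — this is where the specific choice $\gamma = \tfrac{1}{2}$ is crucial, and it is what pins down $\sqrt m$ in the hypothesis of \Cref{prop:type2_to_type1}. Once this identification is noted, the theorem follows in a single line.
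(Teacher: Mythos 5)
Your proposal is correct and follows exactly the paper's own route: the specialization $\gamma=\tfrac12$, $\beta=\tfrac34$ of the right-trapezoid packing in \Cref{sec:right_trapezoid_packing} gives the ``type 2'' trapezoid (bases differing by $\Theta(\sqrt m)$) with waste $O(m^{3/4})$, and plugging $\beta=\nu=\tfrac34$, $\epsilon=0$ into \Cref{prop:type2_to_type1} yields $W(x)\in O(x^{3/5})$. Your verification of the hypotheses and of the $\Theta(\sqrt m)$ base difference matches what the paper leaves implicit.
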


That proves the claim in the introduction.

\section{Discussion}

\subsection{Symmetry of the Construction}

We discuss a symmetry in the construction described in \Cref{sec:tightly_packed_quad}.

Note that if we shift each $T_{\bullet, j}$ stack down until they touch segment $CD$,
the small rectangles in $W_6$ will disappear, instead, some small rectangles
will appear above the stacks $T_{\bullet, j}$---%
and we see the symmetry of the construction between
$W_1 \leftrightarrow W_3$,
$W_2 \leftrightarrow W_6$,
$W_4 \leftrightarrow W_5$,
$\sigma_1 \leftrightarrow \sigma_2$,
$S \leftrightarrow T$.

The symmetry is also shown in the following formula that relates $\theta$, $\sigma_1$ and $\sigma_2$:
\[
\sec(\theta+\sigma_1) \sin \sigma_1 \cdot
\sec(\theta+\sigma_2) \sin \sigma_2
= (1-\cos \theta)^2.
\]

And the following formula (this can be derived from $i_m \approx \frac{\theta^2}{2 \sigma_1}m$):
\[ \frac{i_m}{\sqrt{\sigma_2}} \approx \frac{m}{\sqrt{\sigma_1}}. \]

While the square root may look weird, a better way to look at it is the following:
if $\theta$ and $m$ are fixed, in order to double $\sigma_2$, you need to multiply $i_m$ by roughly the same factor $2$. In formula: $i_m \approx \frac{2\sigma_2}{\theta^2} m$.

Note that the area of $W_6$ does not have a $O(\sigma_2 m)$ term likely because our analysis is not exactly symmetric, in particular the width of $W_2$ is measured being perpendicular to $AD$ but the height of $W_6$ is measured being parallel to $AD$ instead of perpendicular to $AB$. Nevertheless, the result is not affected.

\subsection{Dual of the Packing Method}
\label{subsec:dual_packing}

We would like to note that there is a natural dual
to the method in \Cref{sec:tightly_packed_quad}.
See \Cref{fig:tight_packing_dual}.

\begin{figure}
    \centering
    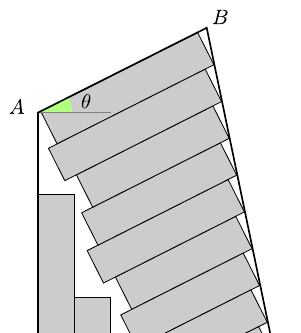
    \caption{Dual of the packing described above.}
    \label{fig:tight_packing_dual}
\end{figure}

Here, the horizontal stacks are angled upwards instead of downwards, and it is necessary that $\theta > \sigma_1$.
We don't analyze this configuration because it appears to be not particularly useful.

\subsection{Limitation of the Packing Method}
\label{subsec:limitation}

Consider the packing method in \Cref{sec:right_trapezoid_packing}.
Suppose we choose some value of $\omega$ different from $\frac{\gamma}{2}$.

If $\omega < \frac{\gamma}{2}$, then $\theta > x^{-\gamma/2}$ for large enough $x$,
then the waste comes from the quadrilaterals alone is $\geq \Omega(x \cdot \theta)$
which is more than $x^{1-\gamma/2}$.

If $\omega > \frac{\gamma}{2}$,
the height of each quadrilateral is $\Theta(x^{\beta+\gamma-2 \omega})$
which is much \emph{less} than the width $\Theta(x^\beta)$.
Consequently, the total perimeter of the quadrilaterals is much more than the sum of the heights.
Assume at least a constant factor of the height comes from the quadrilaterals instead of the gaps
(that is $\beta+\gamma-2 \omega \geq \gamma$), then the wasted area (again comes from the quadrilaterals alone)
is
$\Theta(x^{2 \omega-\gamma} \cdot x \cdot \theta) = \Theta(x^{1+\omega-\gamma})$,
and $1+\omega-\gamma > 1-\frac{\gamma}{2}$.

As such, it appears that $\Omega(x^{1-\gamma/2})$ is a natural lower bound for our method.
And, with $\beta = \frac{1}{2}$, our method of packing naively the top/bottom/right side area in \Cref{sec:right_trapezoid_packing}
does not hurt us, since naive packing already gives $O(x^{1-\gamma/2})$.

There is another interpretation for $\Omega(x^{1-\gamma/2})$.
Consider \Cref{fig:main_trapezoid_subdivision_lower_bound}.
If the trapezoid $ABCD$ is subdivided into smaller right trapezoids by cutting horizontally,
such that each triangle formed by drawing a vertical line from the top right corner
to the bottom edge has area $\Theta(1)$
(colored \textcolor{cyan}{cyan} in the figure),
then you need $\Theta(x^{1-\theta/2})$ such triangles.
It appears unlikely that it is possible to pack each of these small trapezoid with average wasted area $o(1)$ each.

\begin{figure}
    \centering
    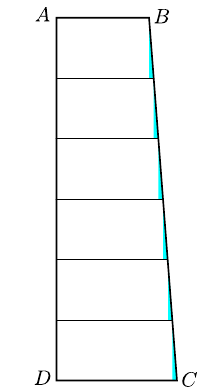
    \caption{Horizontally cut a right trapezoid into $\Theta(x^{1-\gamma/2})$ smaller right trapezoids.}
    \label{fig:main_trapezoid_subdivision_lower_bound}
\end{figure}

As such, we make the following conjecture:
\begin{conjecture}
    \label{conjecture_lower_bound_trapezoid}
    For $0<\gamma<1$,
    a right trapezoid with height $x$ and difference between two bases $\Theta(x^{1-\gamma})$
    cannot be packed in $o(x^{1-\gamma/2})$.
\end{conjecture}

Unfortunately, the most straightforward method to prove this---proving that each such right trapezoid
has wasted area $\Omega(1)$---doesn't work (when $\beta>\gamma>0$),
as depicted in the configuration in \Cref{fig:counterexample_naive_attempt}.

We consider the right trapezoid colored gray.
Draw a blue diagonal line as in the figure,
then draw a yellow line parallel to the bottom side
and a red line parallel to the top side.
Pack each region with stacks of squares with tilt $\Theta(x^{-\beta/2})$.

Then the difference in the $x$-coordinate of the two endpoints of the blue diagonal line
is $\Theta(x^{\beta/2})$, which is larger than the distance $\Theta(x^{\gamma/2})$ marked on the figure.
The total wasted area inside the gray right trapezoid is then
$O(x^{\gamma/2}\cdot (x^{-\beta/2} + x^{-\gamma})) \subseteq o(1)$.

\begin{figure}
    \centering
    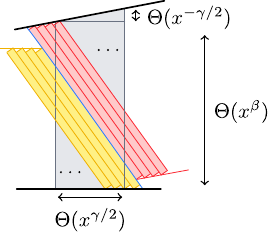
    \caption{Example of packing where the naive attempt to prove Conjecture~\ref{conjecture_lower_bound_trapezoid} fails.}
    \label{fig:counterexample_naive_attempt}
\end{figure}

Nonetheless, we hope it is possible to adapt the methods in \cite{roth1978inefficiency}.

However, getting from the $x^{1-\gamma/2}$ barrier to a better lower bound
for square packing is still highly nontrivial,
since there is no reason why a packing method must use a reduction of the form \Cref{prop:type2_to_type1}.
See \Cref{fig:invariant_insufficiency} for an illustration.

\begin{figure}
    \centering
    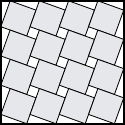
    \caption{An illustration of the waste area being $O(x)$ without any of the known bottlenecks.}
    \label{fig:invariant_insufficiency}
\end{figure}

\subsection{Packing Other Shapes}

We have shown in \Cref{sec:right_trapezoid_packing} that for certain values of $\beta$ and $\gamma$,
a right trapezoid with height $x$, width $\Theta(x^\beta)$, slope of right angle $\Theta(x^{-\gamma})$
can be packed with wasted area $\Theta(x^{1-\gamma})$.
(Note that, assuming $\beta \geq 1-\gamma$, when $\beta$ is too small, the wasted area is dominated by
$x^{1-\beta/2}$, and when $\beta$ is too large the wasted area is dominated by $x^{\beta-1/2}$.)

Our method as is only works for $\gamma$ up to $\frac{1}{2}$.
Additional considerations, such as packing the top/bottom/right area of the gap intelligently
by recursively using the original method
(the method of recursive packing can be found in \cite{Chung_2009})
may make it work for higher $\gamma$.
Further research is needed to determine the behavior at various values of $\gamma$.

In fact, we make the following conjecture:

\begin{conjecture}
    For any value $\beta \geq 0$ and $\gamma \geq 0$, the optimal exponent in the wasted area asymptotic
    of packing a trapezoid with height $x$, width $\Theta(x^\beta)$, slope of right angle $\Theta(x^{-\gamma})$
    is
\[
    \max\Bigl(1-\frac{\max(\beta,1-\gamma)}{2}, 1-\frac{\gamma}{2}, \beta-\frac{1}{2}, \frac{3}{5}\Bigr).
\]
\end{conjecture}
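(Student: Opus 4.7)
The plan is to combine two results already in hand: the right-trapezoid packing of Section~\ref{sec:right_trapezoid_packing} specialized to $\gamma = \tfrac{1}{2}$, $\beta = \tfrac{3}{4}$, and Proposition~\ref{prop:type2_to_type1} applied with $\beta = \tfrac{3}{4}$, $\epsilon = 0$, and $\nu = \tfrac{3}{4}$. The theorem should follow by a direct evaluation of $W_{\beta,\epsilon}(x)$ at those parameters.

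First, I would recall from Section~\ref{sec:right_trapezoid_packing} that taking $\gamma = \tfrac{1}{2}$ and $\beta = \tfrac{3}{4}$ (so $\omega = \tfrac{\gamma}{2} = \tfrac{1}{4}$) yields a packing of the right trapezoid of height $m$, smaller base $\Theta(m^{3/4})$, and slope of non-vertical side $\Theta(m^{-1/2})$ with total wasted area $O(m^{1-\gamma/2}) = O(m^{3/4})$. The horizontal offset between the two parallel sides equals (height) $\times$ (slope) $= m \cdot \Theta(m^{-1/2}) = \Theta(\sqrt{m})$, so this trapezoid has larger base $w + \Theta(\sqrt{m})$, which is exactly the shape appearing in the hypothesis of Proposition~\ref{prop:type2_to_type1}.

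Next I would verify the hypotheses of Proposition~\ref{prop:type2_to_type1} with $\beta = \tfrac{3}{4}$, $\epsilon = 0$, and $\nu = \tfrac{3}{4}$: the inequalities $\tfrac{1}{2} < \beta < 1$ and $0 < \nu < \beta + \tfrac{1}{2}$ become $\tfrac{1}{2} < \tfrac{3}{4} < 1$ and $0 < \tfrac{3}{4} < \tfrac{5}{4}$, both of which hold; and the packing bound $O(m^{3/4})$ from the previous step matches $O(m^{\beta}\log^{\epsilon}m)$ under this choice of parameters. The proposition then yields $W(x) \in O(W_{3/4,0}(x))$. A direct evaluation of the formula
\[
W_{\beta,\epsilon}(x) = x^{\frac{2\beta}{2\beta+1}} \log^{\frac{\epsilon}{2\beta+1}} x
\]
at $\beta = \tfrac{3}{4}$, $\epsilon = 0$ gives $W_{3/4,0}(x) = x^{3/5}$, completing the proof.

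The argument is essentially bookkeeping and no individual step is hard. The only mildly delicate point is the notational clash between Section~\ref{sec:right_trapezoid_packing}, where $\beta$ denotes the exponent of the top-edge length, and Proposition~\ref{prop:type2_to_type1}, where $\beta$ denotes the exponent in the wasted-area bound; here both happen to equal $\tfrac{3}{4}$, but one must keep track of which role each plays to be sure that the output of the first ingredient is formatted as expected by the second. Once this alignment is made, the claimed bound $W(x) \in O(x^{3/5})$ drops out immediately.
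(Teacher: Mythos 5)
Your proposal does not address the statement you were asked to prove. The statement is a \emph{conjecture} about the optimal exponent for packing an entire two-parameter family of right trapezoids: for every $\beta \geq 0$ and $\gamma \geq 0$ it asserts that the best achievable wasted-area exponent equals $\max\bigl(1-\frac{\max(\beta,1-\gamma)}{2},\, 1-\frac{\gamma}{2},\, \beta-\frac{1}{2},\, \frac{3}{5}\bigr)$. That is a two-sided claim --- it requires a matching upper bound (a packing achieving this exponent) \emph{and} a lower bound (no packing does better) for all values of $\beta$ and $\gamma$. The paper explicitly leaves this open; it is stated as a conjecture precisely because neither direction is established in general, and the paper's own discussion notes that even the natural lower-bound strategy (showing each small sub-trapezoid wastes $\Omega(1)$) fails, as illustrated by the counterexample configuration in the discussion section.

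What you have written is instead a proof sketch of the paper's main \emph{theorem}, $W(x) \in O(x^{3/5})$, obtained by combining the right-trapezoid packing at $\beta = \frac{3}{4}$, $\gamma = \frac{1}{2}$ with Proposition~\ref{prop:type2_to_type1}. That argument is essentially correct as far as it goes and matches how the paper derives its headline result, but it bears only a glancing relationship to the conjecture: it supplies one upper-bound data point (a single $(\beta,\gamma)$ pair in the trapezoid family, plus the square-packing consequence), says nothing about other values of $\beta$ and $\gamma$, and says nothing at all about optimality. To prove the conjecture you would need, at minimum, packing constructions realizing the claimed exponent across the whole parameter range (including the regimes where $1-\frac{\gamma}{2}$, $\beta - \frac{1}{2}$, or $\frac{3}{5}$ dominates) together with lower-bound arguments ruling out anything better --- none of which is currently known.
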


The wasted area of the trivial packing method is $\Theta(x (x-\lfloor x \rfloor))$,
therefore, when $x-\lfloor x \rfloor \in o(x^{-2/5})$
then it is possible to get the wasted area $o(x^{3/5})$.
That poses the question:
\begin{question}
    Is it possible to get wasted area $o(x^{3/5})$
    when $x-\lfloor x \rfloor \in \Theta(x^{-2/5})$?
\end{question}
When $x-\lfloor x \rfloor \in \Theta(x^{-2/5})$,
\cite{roth1978inefficiency} gives the lower bound $x^{3/10}$.

Our reason for focusing on right trapezoid is that the problem of packing arbitrary almost-rectangular quadrilaterals
can often be reduced to packing right trapezoids
by dividing such quadrilaterals into a small number of right trapezoids.
The reduction may not be optimal, however.

\subsection{Inspiration for the Primitive Tightly-packed Quadrilateral}
\label{subsec:tightly_packed_quad_inspiration}

Here we explain the connection between \cite{wang2016newresultpackingunit} and our construction in \Cref{sec:tightly_packed_quad}.

In \cite{wang2016newresultpackingunit}, noticing that the slope of each horizontal stack is roughly $\sqrt{\frac{2 \delta}{w}}$,
when $\delta$ changes by a small amount, say $w^{-0.9}$,
in order to make $\frac{2 \delta}{w}$ remains roughly the same,
the denominator should be scaled by roughly the same factor as the numerator.

Suppose $\delta \in \Theta(1)$.
Then $\delta - \Theta(w^{0.1}) \approx \delta \cdot (1-\Theta(w^{-0.9}))$.
We want to scale the numerator by the same factor, namely changing the numerator
from $w$ to $w \cdot (1-\Theta(w^{-0.9}))$, which is $w - \Theta(w^{0.1})$.

In this article, we made two modifications:
\begin{itemize}
    \item we change $w$ (and thus the slope) gradually, instead of in bulk;
    \item we work backward:
        instead of
        using $\delta$ and the desired slope to determine $w$ as in \cite{wang2016newresultpackingunit},
        we use the slope of the stack of squares and $\delta$ to determine $w$.
        As such, we can ensure the slope difference is exactly zero, at the cost of a small wasted area elsewhere.
\end{itemize}

\subsection{On Packing of Triangle-like Right Trapezoid}
\label{subsec:triangle_like_trapezoid}

Consider a right trapezoid $T(x, \Theta(x^\beta), \Theta(x^{-\gamma}))$ where $\beta>0$ and $\gamma>0$.
Even though $\gamma>0$, it is not necessarily true that the trapezoid will look approximately like
a rectangle.
This is because the top (smaller) side has length $\Theta(x^\beta)$, the bottom (larger) side
has length $\Theta(x^\beta)+\Theta(x^{1-\gamma})$, if $1-\gamma>\beta$,
then the right trapezoid in fact looks like a triangle.
See \Cref{fig:triangle_like_trapezoid}.

\begin{figure}
    \centering
    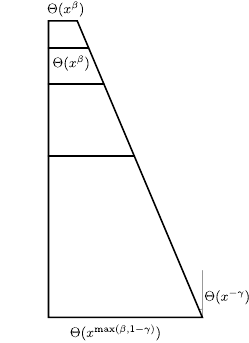
    \caption{Illustration of a triangle-like right trapezoid.}
    \label{fig:triangle_like_trapezoid}
\end{figure}

We would like to note that it is possible to split a triangle-like trapezoid to $O(\log x)$ right trapezoids,
with each of them having bottom side no more than twice the top side.
This would allow us to only focus on packing trapezoids with ratio of two bases $\in \Theta(1)$.

In fact, we conjecture the following:

\begin{conjecture}
    The strategy of packing a triangle-like right trapezoid by first subdividing into $O(\log x)$
    right trapezoids as above, then packing each of them optimally,
    is asymptotically no worse than the optimal strategy.
\end{conjecture}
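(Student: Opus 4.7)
The plan is to prove a relative bound of the form $W_{\mathrm{sub}}(T) \leq W_{\mathrm{opt}}(T) + O(x^{1-\gamma})$ by a ``restrict and trim'' argument, then invoke the preceding conjecture of this subsection to absorb the additive overhead. Here $T$ is the triangle-like right trapezoid with height $x$, top $\Theta(x^\beta)$, bottom $\Theta(x^{1-\gamma})$, and $\beta < 1-\gamma$; $W_{\mathrm{opt}}(T)$ is the wasted area of any optimal packing of $T$; and $W_{\mathrm{sub}}(T)$ is the total wasted area obtained by packing each piece of the subdivision optimally. The natural subdivision is dyadic: cut $T$ horizontally at heights where the current width equals $2^k x^\beta$ for $k = 0, 1, \dots, K$ with $K = O(\log x)$, yielding sub-trapezoids $T_0, \dots, T_K$ each with bottom-to-top ratio at most $2$, as required.

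For the upper side of the relative bound, I would fix any packing $P$ of $T$ and let $S_k$ be the set of unit squares of $P$ that cross the $k$-th horizontal cut line $\ell_k$. Since the height of every $T_k$ is $\Theta(2^k x^{\beta+\gamma}) \gg 1$, each unit square crosses at most one $\ell_k$. Deleting $\bigcup_k S_k$ from $P$ leaves, for each $k$, a valid packing of $T_k$, so
\[
    \sum_{k=0}^{K} W_{\mathrm{opt}}(T_k) \;\leq\; W_T(P) + \sum_{k=0}^{K} |S_k|.
\]
The number of unit squares of $P$ crossing $\ell_k$ is bounded by the length of $\ell_k$ plus $O(1)$, so $\sum_k |S_k| = O\bigl(\sum_k 2^k x^\beta\bigr) = O(x^{1-\gamma})$. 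Specializing $P$ to an optimal packing of $T$ gives $W_{\mathrm{sub}}(T) \leq W_{\mathrm{opt}}(T) + O(x^{1-\gamma})$.

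The hard part is showing that this $O(x^{1-\gamma})$ boundary overhead is asymptotically dominated by $W_{\mathrm{opt}}(T)$. Granting the earlier conjecture of this subsection that any such trapezoid requires $\Omega(x^{1-\gamma/2})$ wasted area, and noting $x^{1-\gamma/2} \gg x^{1-\gamma}$ whenever $\gamma > 0$, we obtain $W_{\mathrm{sub}}(T) \leq (1+o(1))\, W_{\mathrm{opt}}(T)$, yielding ``asymptotically no worse than the optimal strategy.'' The main obstacle is precisely this lower bound: to make the present conjecture unconditional one would need at least $W_{\mathrm{opt}}(T) \geq \Omega(x^{1-\gamma})$ (strictly weaker than the earlier conjecture), presumably by adapting the modular-drift argument of \cite{roth1978inefficiency} to the slanted edge of length $\Theta(x)$ with slope $\Theta(x^{-\gamma})$ instead of to a non-integer square side.

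One subtlety for a full write-up: each $T_k$ with small $k$ is itself still triangle-like, so ``packed optimally'' must be read as the literal packing optimum (which is well defined even if unknown explicitly), making the argument above unconditional modulo the lower-bound input. If instead one recursed and replaced $W_{\mathrm{opt}}(T_k)$ by some specific strategy's wasted area, the recursion adds only a logarithmic factor across the $K = O(\log x)$ pieces, still dominated by the conjectured $\Omega(x^{1-\gamma/2})$ lower bound, and the ``asymptotically no worse'' conclusion persists.
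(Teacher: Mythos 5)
The paper offers no proof of this statement---it is stated as a conjecture and left open---so there is no ``paper's own proof'' to compare against; I can only assess your argument on its own terms. Your restrict-and-trim reduction is sound as far as it goes: the dyadic cuts give $K=O(\log x)$ pieces with base ratio at most $2$ and heights $\Theta(2^k x^{\beta+\gamma})\gg 1$; deleting from a packing $P$ of $T$ the squares meeting the cut lines leaves a valid packing of each piece, giving $\sum_k W_{\mathrm{opt}}(T_k)\leq W_T(P)+\sum_k|S_k|$; and $\sum_k|S_k|=O(x^{1-\gamma})$. (One small repair: ``bounded by the length of $\ell_k$ plus $O(1)$'' is not literally right, since a unit square can meet $\ell_k$ in an arbitrarily short chord; but every square meeting $\ell_k$ lies in the band of half-width $\sqrt2$ around $\ell_k$, and disjointness plus an area count in that band gives $|S_k|=O(2^k x^\beta+1)$, which is all you need.) This correctly reduces the conjecture to the lower bound $W_{\mathrm{opt}}(T)\in\Omega(x^{1-\gamma})$.

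That lower bound is the genuine gap, and you are right to flag it as the main obstacle rather than claim it: it is nowhere proved in the paper, only conjectured in the stronger form $x^{1-\gamma/2}$ (in the ``Limitation of the Packing Method'' subsection, not this one), and the paper explicitly shows that the most natural route to it---proving each thin horizontal slice wastes $\Omega(1)$---fails, via the configuration in \Cref{fig:counterexample_naive_attempt}. Two further cautions if you want to quote that conjecture as your input. First, it is phrased as ``cannot be packed in $o(x^{1-\gamma/2})$'', which is a $\limsup$-type statement, whereas your absorption step $O(x^{1-\gamma})=o(W_{\mathrm{opt}}(T))$ needs a bound valid for all large $x$ (a $\liminf$-type statement); along a subsequence where $W_{\mathrm{opt}}$ happened to be small the additive overhead could dominate. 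Second, the same issue recurs for each piece $T_k$ if one wants the multiplicative form $W_{\mathrm{sub}}=O(W_{\mathrm{opt}})$ rather than the additive one. So the net result is a clean conditional reduction---a useful observation the paper does not contain---but not a proof; the statement remains open exactly where you locate the difficulty.
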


\section{Conclusion}

We have shown that the wasted area when packing
a large square with side length $x$ can be as small as $O(x^{3/5})$.
Further research is needed to prove or disprove various bounds,
such as the bound $x^{1-\gamma/2}$ pointed out earlier,
and extending the result to non-square shapes.

\section{Acknowledgements}

We would like to thank anonymous reviewers for their enthusiasm, and for many
helpful suggestions to improve the manuscript.

\bibliography{sn-bibliography}

@article{roth1978inefficiency,
        title = {Inefficiency in packing squares with unit squares},
        volume = {24},
        issue = {2},
        page = {170-186},
        publisher = {Elsevier BV},
        url = {http://dx.doi.org/10.1016/0097-3165(78)90005-5},
        doi = {10.1016/0097-3165(78)90005-5},
        journal = {Journal of Combinatorial Theory, Series A},
        year = {1978},
        month = {3},
        author = {K.F Roth and R.C Vaughan}
}

@article{Chung_2019, title={Efficient Packings of Unit Squares in a Large Square}, volume={64}, ISSN={1432-0444}, url={http://dx.doi.org/10.1007/s00454-019-00088-9}, DOI={10.1007/s00454-019-00088-9}, number={3}, journal={Discrete {\&} Computational Geometry}, publisher={Springer Science and Business Media LLC}, author={Chung, Fan and Graham, Ron}, year={2019}, month=apr, pages={690–699}
}

@misc{wang2016newresultpackingunit,
	title={A New Result on Packing Unit Squares into a Large Square}, 
	author={Shuang Wang and Tian Dong and Jiamin Li},
	year={2016},
	eprint={1603.02368},
	archivePrefix={arXiv},
	primaryClass={math.CO},
	url={https://arxiv.org/abs/1603.02368}, 
}

@article{Chung_2009, title={Packing equal squares into a large square}, volume={116}, ISSN={0097-3165}, url={http://dx.doi.org/10.1016/j.jcta.2009.02.005}, DOI={10.1016/j.jcta.2009.02.005}, number={6}, journal={Journal of Combinatorial Theory, Series A}, publisher={Elsevier BV}, author={Chung, Fan and Graham, Ron}, year={2009}, month=aug, pages={1167–1175}
}

@article{Erdos_1975,
title = {On packing squares with equal squares},
journal = {Journal of Combinatorial Theory, Series A},
volume = {19},
number = {1},
pages = {119-123},
year = {1975},
issn = {0097-3165},
doi = {https://doi.org/10.1016/0097-3165(75)90099-0},
url = {https://www.sciencedirect.com/science/article/pii/0097316575900990},
author = {Erdős,  Paul and Graham,  Ron},
abstract = {The following problem arises in connection with certain multidimensional stock cutting problems: How many nonoverlapping open unit squares may be packed into a large square of side α? Of course, if α is a positive integer, it is trivial to see that α2 unit squares can be succesfully packed. However, if α is not an integer, the problem becomes much more complicated. Intuitively, one feels that for α = N + (1100), say (where N is an integer), one should pack N2 unit squares in the obvious way and surrender the uncovered border area (which is about α50) as unusable waste. After all, how could it help to place the unit squares at all sorts of various skew angles? In this note, we show how it helps. In particular, we prove that we can always keep the amount of uncovered area down to at most proportional to α711, which for large α is much less than the linear waste produced by the “natural” packing above.}
}

@misc{bui2025squarepackingasymptoticallysmallest,
      title={Square Packing with Asymptotically Smallest Waste Only Needs Good Squares}, 
      author={Hong Duc Bui},
      year={2025},
      eprint={2504.09489},
      archivePrefix={arXiv},
      primaryClass={cs.CG},
      url={https://arxiv.org/abs/2504.09489v1}, 
}

@Article{Arslanov2025,
author="Arslanov, Marat Z.
and Bui, Hong Duc",
title="Note on ``Efficient Packings of Unit Squares in a Large Square''",
journal="Discrete {\&} Computational Geometry",
year="2025",
month="Aug",
day="03",
issn="1432-0444",
doi="10.1007/s00454-025-00767-w",
url="https://doi.org/10.1007/s00454-025-00767-w"
}

@mastersthesis{McClenagan2024,
  title = {Asymptotic square packing problems},
  url = {http://dx.doi.org/10.24124/2024/59553},
  DOI = {10.24124/2024/59553},
  school = {University of Northern British Columbia},
  author = {McClenagan,  Rory}
}

@misc{mcclenagan2026,
  title={Optimally Packing a Large Square by Unit Squares}, 
  author = {McClenagan,  Rory},
  year={2026},
  eprint={2602.01484},
  archivePrefix={arXiv},
  primaryClass={math.AG},
  url={https://arxiv.org/abs/2602.01484}, 
}

\clearpage

\appendix
\crefalias{section}{appendix}

\section{Alternative Method of Wasted Area Calculation}
\label{alternative_wasted_area_calculation}

In this section, we try to show that the area of the trapezoid $ABCD$
in \Cref{remark:alternative-wasted-area-calculation}
is $\leq m \cdot i_m + O(\theta \cdot (m + i_m) + 1)$.

Draw segments $BH$, $CJ$, $CK$ perpendicular to $AD$,
$CJ$ parallel to $AB$,
with $H$ and $K$ on line $AD$, $J$ on line $HB$, $L$ on line $AB$.
See \Cref{fig:calculate_segment_lengths} for an illustration.

\begin{figure}
    \centering
    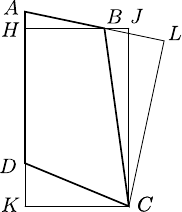
    \caption{Illustration for alternative method of calculating wasted area in \Cref{remark:alternative-wasted-area-calculation}.}
    \label{fig:calculate_segment_lengths}
\end{figure}

Intuitively, segment $AD$ has length $i_m + O(\theta)$,
the perpendicular $CK$ has length $m + O(\theta+\sigma_1)$,
therefore triangle $ACD$ has area $\frac{1}{2} (i_m + O(\theta)) \cdot (m + O(\theta+\sigma_1))$.
Similarly, segment $AB$ has length $m + O(\theta)$,
the perpendicular $CL$ has length $i_m + O(\theta+\sigma_2)$,
therefore triangle $ABC$ has area $\frac{1}{2} (i_m + O(\theta+\sigma_2)) \cdot (m+O(\theta))$.
So the total area is $m \cdot i_m + O(\theta \cdot (m+i_m)+1)$.
The difference gives the expected result.

There is an extra term $m \sigma_2$ or $i_m \sigma_1$,
but $i_m \sigma_1 \approx \frac{\theta^2}{2 \sigma_1} m \sigma_1 = \frac{\theta}{2} \cdot \theta m \leq \theta m$,
and similar for the other direction.

Let us compute the total area more formally.

We have mentioned above that segment $AB$ has length $m+\tan \theta$.
We see that the top edge of $T_{i_2, 1}$ is $\tan \theta + (i_2-1) \sec \theta$ below $A$,
so the bottom edge of $T_{i_m, 1}$ is $\tan \theta + (i_2-1) \sec \theta + (i_m-i_2+1)$
below $A$,
so segment $AD$ has length
$\tan \theta + (i_2-1) \sec \theta + (i_m-i_2+1) + (\sec \theta-1)
= i_m + \tan \theta + (\sec \theta-1) i_2
$.


Then, segment $BH$ has length $(m+\tan \theta)\cdot\cos \theta$ and
segment $AH$ has length $(m+\tan \theta)\cdot\cos \theta$.

Let $x$ be the length of segment $KC$.
This is also equal to length $HJ$,
so $BJ = x - BH$,
so $HK = JC = (x - BH) \cot \sigma_1$,
so $KD = ((x - BH) \cot \sigma_1 - DH)$.

Therefore $x = ((x - BH) \cot \sigma_1 - DH) \cot (\theta+\sigma_2)$.
Solving for $x$ gives 
\begin{align*}
    x &= \frac{BH + DH \tan \sigma_1}{1 - \tan (\theta+\sigma_2) \tan \sigma_1} \\
      &= \frac{(m+\tan \theta) \cos \theta (1-\tan \sigma_1) + \tan \sigma_1 (i_m + \tan \theta + (\sec \theta-1) i_2))}{1-\tan(\theta+\sigma_2) \tan \sigma_1} .
\end{align*}

This appears to be difficult to analyze, so let us try to analyze it in a different way.
Let $E$ be the top right corner of $\mathcal C$, $G$ be the top right corner of $\mathcal D$,
$F$ on segment $CD$ such that $EF$ is vertical, and drop perpendicular $CM$ to line $EF$.
See \Cref{fig:alt_analysis} for an illustration.
Here $\mathcal C$ and $\mathcal D$ are unit squares
defined the same way as in \Cref{fig:zoom_in_D}.

\begin{figure}
    \centering
    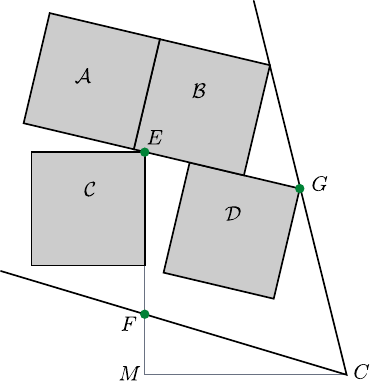
    \caption{More detailed illustration for the argument in \Cref{remark:alternative-wasted-area-calculation}.}
    \label{fig:alt_analysis}
\end{figure}

Then, segment $EG$ has length $\leq 1+\tan(\theta+\sigma_1)$ by an analysis similar to \Cref{fig:2}.
To compute the length of segment $EF$, note that:
\begin{itemize}
    \item point $D$ is $\sec \theta-1$ below the bottom left corner of $T_{i_m, 1}$,
    \item point $F$ is $(m-1)\tan(\theta+\sigma_2)$ below $D$,
    \item the square $\mathcal C$ is $\Gamma_3+\dots+\Gamma_m$ below the square $T_{i_m, 1}$,
\end{itemize}
therefore
\begin{align*}
    EF
&= 1+(m-1) \tan(\theta+\sigma_2) + (\sec \theta-1) - (\Gamma_3+\dots+\Gamma_m) \\
&= 1+(m-2) (\tan(\theta+\sigma_2)-\tan \theta) + \tan(\theta+\sigma_2) - (i_m-i_2-1) (\sec \theta-1) \\
&= 1+\Bigl( \frac{1-\Delta_1}{\Delta_2} (m-2) - (i_m-i_2-1)\Bigr) (\sec \theta-1) + \tan(\theta+\sigma_2) \\
&= 1+\biggl( \Bigl(\frac{1-\Delta_1}{\Delta_2} (m-1) - i_m\Bigr) - (\frac{1-\Delta_1}{\Delta_2} - i_2) + 1 \biggr) (\sec \theta-1) + \tan(\theta+\sigma_2) \\
&\leq 1+ 2 (\sec \theta-1) + \tan(\theta+\sigma_2) .
\end{align*}
Asymptotically, we can assume $EF<\frac{3}{2}$, $EG<\frac{3}{2}$
and both $\sigma_1$ and $\theta+\sigma_2$ are sufficiently small.
Then $GC<2$, so $CM = EG \cos \theta + GC \sin \sigma_1 \in 1+O(\theta+\sigma_1)$.

\end{document}